\definecolor{accentcolor}{RGB}{26, 148, 49}
\newcommand{\zerodisplayskips}{
  \setlength{\abovedisplayskip}{8pt}
  \setlength{\belowdisplayskip}{7pt}
  \setlength{\abovedisplayshortskip}{8pt}
  \setlength{\belowdisplayshortskip}{7pt}}
\appto{\normalsize}{\zerodisplayskips}
\appto{\small}{\zerodisplayskips}
\appto{\footnotesize}{\zerodisplayskips}
\providecommand{\U}[1]{\protect\rule{.1in}{.1in}}
\providecommand{\U}[1]{\protect\rule{.1in}{.1in}}
\newtheorem{proposition}{Proposition} \setcounter{proposition}{-1}
\newtheorem{corollary}{Corollary}
\newtheorem{claim}{Claim}
\begin{document}

\title{Disentangling Exploration from Exploitation\thanks{We thank Arjada Bardhi, Matthew Ellman, Francesco Fabbri, Nicolas Klein, Xiaosheng Mu, and Bruno Strulovici for helpful comments. We gratefully acknowledge financial support from the National Science Foundation, grant SES 1949381.}} \author{Alessandro Lizzeri\thanks{Princeton University and NBER; \href{mailto:lizzeri@princeton.edu}{\texttt{lizzeri@princeton.edu} }} \hspace{3mm} \hspace{2mm} Eran Shmaya\thanks{State University of New York at Stony Brook; \href{mailto:eran.shmaya@stonybrook.edu}{\texttt{eran.shmaya@stonybrook.edu}}} \hspace{3mm}\hspace{2mm} Leeat Yariv\thanks{Princeton University, CEPR, and NBER; \href{mailto:lyariv@princeton.edu}{\texttt{lyariv@princeton.edu}}}} \date{\today }

\pretitle{\begin{flushleft}\LARGE} 
\posttitle{\end{flushleft}}
\preauthor{\begin{flushleft}\large} 
\postauthor{\end{flushleft}}
\predate{\begin{flushleft}} 
\postdate{\end{flushleft}}
\settowidth{\thanksmarkwidth}{*}
\setlength{\thanksmargin}{-\thanksmarkwidth}

\maketitle \thispagestyle{empty}

\renewenvironment{abstract} {\par\noindent\textbf{\abstractname.}\ \ignorespaces} {\par\medskip} \begin{abstract} Starting from \cite{robbins1952some}, the literature on experimentation via multi-armed bandits has wed exploration and exploitation. Nonetheless, in many applications, agents' exploration and exploitation need not be intertwined: a policymaker may assess new policies different than the status quo; an investor may evaluate projects outside her portfolio. We characterize the optimal experimentation policy when exploration and exploitation are disentangled in the case of Poisson bandits, allowing for general news structures. The optimal policy features complete learning asymptotically, exhibits lots of persistence, but cannot be identified by an index à la Gittins. Disentanglement is particularly valuable for intermediate parameter values.

\vspace{4mm}

\noindent\textit{Keywords}: Exploration and Exploitation, Poisson Bandits \vspace{2mm}

\noindent\textit{JEL codes}: C73, D81, D83, O35

\end{abstract}
\newpage \pagenumbering{arabic}

\section{Introduction}

In various applications, decision-makers navigate a dynamic landscape by simultaneously taking actions and gathering insights about their environment. Policymakers evaluate the performance of new policies while managing existing ones. Investors assess their financial portfolios, gauging immediate returns and future prospects. Employees navigate their career paths by exploring opportunities within their organization or beyond. 

The seminal work of \cite{robbins1952some}, \cite{gittins1979bandit}, and \cite{gittins1979dynamic}, proposed a dynamic model that fuses learning with decision-making. In their classical multi-armed bandit problem, each action taken by an agent provides insights solely into that specific action’s effectiveness. Like a bet on a slot machine---where one must pay to learn the outcome---optimal choices balance the benefits of learning about the action (exploration) and its consequent payoff benefits (exploitation).

We propose a framework for studying settings in which exploration and exploitation are, and can be, untangled. We leave behind the slot machine model, and instead consider decision-makers who can learn about choices they might not immediately pursue. We characterize the resulting optimal policy and illustrate when the ability to disentangle exploration from exploitation is especially advantageous.

In our model, an agent encounters a recurring decision between two uncertain projects. These projects might be policies, stocks, job prospects, etc. To simplify, we assume that each project offers either a positive flow payoff if successful (good project) or no payoff if unsuccessful (bad project). The quality of each project is determined independently at the outset, with prior probabilities known to the agent.

In each period, the agent decides which project to exploit; that is, which policy to implement, which investment to make, which job to choose, etc. Her choices determine the overall payoff, calculated as the discounted sum of rewards obtained from exploitation. The agent learns incrementally throughout the process: at the start of each period, she possesses a unit of attention, or exploration, which she allocates between the two projects. When exploring a project, the agent can get conclusive information about its quality, which arrives at a Poisson rate. The arrival rate may differ depending on the explored project and whether it is good or bad.

In contrast to the traditional multi-armed bandit framework, our model's operating assumption is that the agent can gather information through exploration, but not through exploitation. However, in many real-world scenarios, the exploited project yields some valuable data. To accommodate this, we introduce a constrained version of our model where a predetermined portion of exploration is allocated to the exploited project each period. When the predetermined portion is set to $0$, exploration and exploitation are entirely disentangled. Conversely, when this portion is set to $1$, exploration and exploitation are fully entangled and our environment admits several settings studied in the literature if we assume further that one project is known to be good. Specifically, when news arrives only about good projects at positive rates, this aligns with the \cite{keller2005strategic} (KRC) setting. When news arrives exclusively about bad projects at positive rates, this mirrors the \cite{keller2010strategic} (KR) setting.\footnote{While special, these settings have been used to study a variety of applications, including delegation problems \citep{horner2013incentives,guo2016dynamic}, experimentation by committee \citep{strulovici2010learning}, dynamics of discrimination \citep{bardhi2020early}, and many others; see our literature review.}

We first show that, whenever some portion of exploration can be dedicated to an unexploited project, an optimizing agent exploits the \textit{realized} best project asymptotically. Intuitively, if there is any room for the agent to be swayed by information toward exploiting a different project than the one she already exploits, any level of disentanglement would allow her to gain that information in the long run. The asymptotic optimality in our setting underscores a fundamental difference from the conventional setting, with full entanglement, where it is well-known that the agent’s exploitation need not converge to the ex-post optimal project.

We start with the special case in which one project is known to be good, and therefore safe, as in KRC and KR, although we allow for arbitrary Poisson arrival rates of news. In this case, the agent explores the uncertain, or risky, project as much as possible. With any level of entanglement, the agent’s exploitation choices constrain her exploration. She thus faces the standard exploration / exploitation dilemma. We show that the optimal strategy involves setting a threshold on the posterior probability of the risky project’s favorability. When this threshold is surpassed, the agent chooses to exploit and further explore the risky project; otherwise, she opts for the safe project, minimally exploring it using the predetermined portion of her attention budget.

In Proposition \ref{prop1}, we demonstrate that the optimal threshold depends only on the maximum between the arrival rates of good and bad news. In general good news settings, where good news arrives faster than bad news, receiving no news makes the agent increasingly pessimistic. In general bad news settings, where bad news arrives more rapidly, receiving no news makes the agent increasingly optimistic. In either case, the analytical description of the optimal threshold is identical. The optimal policy exhibits different features, naturally. In particular, as in KRC and KR, with a high enough initial prior that the risky project is good, absent news arrival, the agent ultimately switches her exploitation in good news settings, but never does so in bad news settings.

The optimal policy changes when the returns to both projects are uncertain. To illuminate the forces within our model, we focus on scenarios where exploration and exploitation are entirely disentangled. Then, the agent optimizes exploitation by favoring the myopically optimal project at any given moment. However, determining the optimal exploration strategy is less straightforward and does not adhere to an index policy akin to Gittins’. 

We begin by examining balanced news settings, where both good and bad news arrive at equal rates for each project. In such settings, the passage of time without any news does not provide any insight into the quality of a project. Consequently, the optimal policy remains constant, and the primary consideration is which project to explore at the outset.

In Proposition \ref{prop2}, we show that the optimal exploration strategy is determined via the comparison of a particular formulation of the information value associated with each project. This value is influenced not only by the rates at which news arrives but also by the relative rewards and prior probabilities assigned to each project’s success. Specifically, when one project is significantly more likely to succeed compared to another, the inferior project may hold greater information value, as there is a higher probability that new information could lead the agent to switch the exploited project. Our characterization highlights two key deviations from the optimal policy observed in the traditional, fully entangled environment. First, an increase in the prior probability of a project’s success may prompt the agent to explore the alternative project in our environment, but not in the classical environment. Second, the optimal exploration strategy is intricately linked to the interplay between the parameters of both projects and, as noted, cannot be described via a separable index.

In general good news settings, Proposition \ref{prop3} shows that the agent still optimally exhibits a lot of persistence in her exploration. Absent news, the agent switches which project she explores at most once. This switch occurs only if the initially explored project aligns with the myopically optimal one, i.e., the project promising higher expected payoffs.

This outcome is rooted in a fundamental principle of information economics: valuable information is actionable and influences which project is exploited. In general, actionable information manifests in two forms, either adverse news regarding the exploited project, or favorable news concerning the alternative project. To glean intuition for the persistence of optimal exploration, consider pure good news settings, where only good news arrives at a positive rate about either project, as in KRC. In such settings, in the short run, actionable information materializes only through positive news about the unexploited project. If the agent explores the unexploited project, absent news, she becomes increasingly pessimistic about the explored project. Consequently, she has no incentive to switch either her exploited project or her explored project. 

The optimality of persistent exploration starkly contrasts with predictions derived from the classical, fully entangled environment. In the classical good news setting, as the agent explores and exploits a project, her confidence in its potential diminishes gradually, leading to a reduction in its corresponding Gittins index. Eventually, the indices for both projects align, prompting the agent to alternate between the projects until more information emerges, hence switching infinitely often. Subsequently, upon receiving positive news about either project, the agent indefinitely explores and exploits that project, effectively terminating further information gathering. In particular, with some probability, the agent ultimately exploits the project deemed inferior ex-post.

In general bad news settings, Proposition \ref{prop4} illustrates that optimal exploration strongly depends on projects’ potential rewards. Once the agent embarks on exploring the high-reward project, she remains committed to it without changing her exploration unless information arrives. Furthermore, in the absence of news, the agent inevitably explores the high-reward project at some point. Thus, similar to the dynamics observed in good news settings, the agent may switch her exploration at most once without news arrival. 

To gain intuition, consider pure bad news settings, where only bad news arrives at positive rates, as in KR. In such settings, when the agent explores the high-reward project and no news is received, her confidence in the project progressively grows. Only negative news regarding that project would prompt her to switch her exploited project. Therefore, it remains optimal to continue exploring the high-reward project. One might question why the same logic wouldn't apply to the low-reward project. For the low-reward project, even if the agent maintains a sufficiently optimistic outlook, positive information about the high-reward project could still sway her exploitation choice. The only means of acquiring such information is by exploring the high-reward project for an extended period.

The distinction from the classical environment hinges on the nature of news arrival. In good news settings, the separation of exploration from exploitation results in a higher level of persistence in the optimal policy. Conversely, in bad news settings, there tends to be comparatively less persistence. Indeed, in the classical bad news environment, once the agent initiates exploration and exploitation of a project, the absence of news fosters a growing optimism towards the project. Thus, regardless of the project's potential reward, the agent optimally refrains from switching to an alternative.

In the settings we consider, the payoff benefits of disentanglement are most pronounced when parameters fall within intermediate ranges: the discount rate, arrival rates of news, and initial beliefs regarding the viability of the projects under consideration. Collectively, our results show that when information and actions occur in sync, the ability to disentangle the two not only impacts behavioral predictions, but carries important implications for potential payoffs. 

\section{Related Literature}

The multi-armed bandit problem was likely initially posed by \cite{thompson1933likelihood} in the context of clinical trials. Starting from \cite{robbins1952some}, the statistics literature has offered insights on the features of optimal policies. \cite{gittins1979bandit} and \cite{gittins1979dynamic} present the first general index-based optimal policies. \cite{gittins2011bandits} offer a survey of ensuing results. As already noted, the special case of Poisson bandits was introduced by \cite{keller2005strategic} (KRC) and \cite{keller2010strategic} (KR), assuming two arms, only one of which yields uncertain rewards. 

The basic multi-armed bandit setting has been utilized for a wide array of applications in economics, ranging from monopoly pricing decisions \citep{rothschild1974two}, to labor market choices and matching \citep{jovanovic1979job, miller1984job}, to venture capital \citep{bergemann1998venture}, to the design of recommender systems \citep{che2018recommender}, to team experimentation \citep[][in adition to KRC and KR]{bolton1999strategic, strulovici2010learning}; for a survey, see \cite{bergemann2006bandit}.\footnote{The analysis in \cite{che2018recommender} relates to the special case of one safe project in our environment, which we discuss in Section \ref{SafeProject}. \cite{eliaz2022optimal} consider an extension of the basic model, where bandits---or tasks, in their framework---evolve when attended to, and payoffs also depend on unattended tasks. There is also recent empirical work that uses the basic multi-armed bandit setting in the context of pharmaceutical demand and physician prescribing behavior \citep[see ][]{crawford2005uncertainty, currie2020understanding,dickstein2021efficient} and in the context of research and development \citep[][]{zhuo2023exploit}.} 

Our paper also relates to the literature on dynamic information acquisition, initiated by \cite{wald1947foundations}. In the most basic model, an agent can acquire costly signals in sequence, and determine when to stop information collection and take a decision. In our setting, the cost of exploring one project is the option value of exploring the other. Unlike the classical model, the cost is therefore changing and endogenous. Furthermore, while our setting is dynamic, it does not correspond to a stopping problem per se.\footnote{\cite{damiano2020learning} study the KRC setting where an agent can also acquire costly auxiliary information, disconnected from exploitation, which produces conclusive news at Poisson rates. They show ways by which the information optimally acquired depends on the agent's posterior.} 

The idea that decision makers may be able to attend to or acquire information only up to a limit also appears in the rational inattention literature, see \cite{sims2003implications} and \cite{mackowiak2023rational}'s survey. Recent work considers dynamic attention allocation. For example, \cite{che2019optimal} consider an environment \'a la \cite{wald1947foundations}---a stopping problem---in which a decision maker acquires information from different news sources, each providing conclusive news about the underlying state at a Poisson rate, prior to making an irreversible binary decision. Since the rates at which news arrives from either source may depend on the underlying state, the optimal policy balances the speed at which either news source delivers news and its ``bias,'' a trade-off different than the one underlying our agent's problem. \cite{liang2022dynamically} also study a variation of the Wald problem, where a decision maker allocates a fixed attention budget across multiple sources of information to learn about a decision-relevant state. Information sources are diffusion processes whose unknown drift is an attribute that contributes linearly to determine the state. In the optimal policy, the decision maker initially allocates all attention to the most informative source, then gradually incorporates additional sources until, eventually, attends to all sources.  

There is also a literature in computer science that takes an algorithmic approach to identifying which arm is most desirable in a multi-armed bandit problem. \cite{bubeck2011pure} is perhaps the most conceptually related to our paper. They focus on regret-minimizing exploration algorithms. There is no simultaneous exploitation, and the objective is the difference between the average payoff of the best arm and the average payoff obtained by the algorithm's recommendation. See also \cite{audibert2010best} and the literature that followed.

\section{The Model}

An agent allocates exploration and exploitation resources between two projects, $L$ and $H$, in continuous time. Project $z=L,H$ is good with probability $p_{z}$ and bad with the complementary probability $1-p_{z}$. The quality of the two projects is determined independently. If project $z$ is good, it pays a flow reward of $R_{z}>0$; If project $z$ is bad, it pays $0$ forever. We assume $R_{H}>R_{L}>0$. We also assume that $p_{L}, p_{H}>0$ and $p_{H}<1$ so that there is meaningful uncertainty about which project is superior.

As in KRC, we assume that the agent has a unit of investment to allocate, capturing the exploitation aspect of the agent's choice. At any moment, the agent's instantaneous reward from investing $k_{z}\geq 0$ in exploiting project $z=L,H$ is given by: 
\begin{equation*} 
k_{L}\tilde{R}_{L}+k_{H}\tilde{R}_{H}, 
\end{equation*} 
where $k_{L}+k_{H}=1$ and $\tilde{R}_{z}$ denotes the realized rewards from project $x=L,H$. We assume the agent's exploitation policy is measurable with respect to the information available at any time. As is standard, payoffs are discounted at a fixed rate $r>0$.\footnote{We later show that for most of our analysis, the agent optimally chooses $k_{z}\in \{0,1\}$ for $z=L,H$. We maintain this greater generality in order to contrast some of our results with the classical, fully entangled setting, where interior investments are sometimes utilized in the optimal policy.}

Analogously, at any moment, the agent allocates a unit budget of attention, or information collection resources, across the projects. This is the exploration aspect of the agent's choice. If the agent spends a fraction $\alpha _{z}>0$ of her attention budget exploring project $z=L,H$, she may receive conclusive news about project $z$. Specifically, if project $z$ is good, the agent receives good news---a conclusive signal indicating that the project is good---at a Poisson rate $\alpha _{z}\lambda_{z}^{g}$ (and no news otherwise). Similarly, if project $z$ is bad, the agent receives bad news---a conclusive signal asserting the project is bad---at a Poisson rate $\alpha _{z}\lambda_{z}^{b}$. We assume $\max \{\lambda_{z}^{g},\lambda_{z}^{b}\}>0$ and that $sign(\lambda_{H}^{g}-\lambda_{H}^{b})=sign(\lambda_{L}^{g}-\lambda_{L}^{b})$, with the convention that $sign(0)=0$. That is, the agent has opportunities to learn and the information structure is similar across the two projects. As for exploitation, we assume the agent's exploration policy is measurable with respect to the information available at any time.

Whenever $\lambda_{z}^{g}-\lambda_{z}^{b}>0$ for $z=L,H$, good news arrives at a higher rate than bad news. We refer to such environments as \textit{good news settings}. Absent any news, the agent becomes increasingly pessimistic: no news is bad news. A special case corresponds to the frequently studied good news setting of KRC, which we term \textit{pure good news}, where $\lambda_{z}^{g}>0$ and $\lambda_{z}^{b}=0$ for $z=L,H$. Conversely, whenever $\lambda_{z}^{b}-\lambda_{z}^{g}>0$ for $z=L,H$, bad news arrives at a higher rate than good news. We refer to such settings as \textit{bad news settings.} Absent any news, the agent becomes increasingly optimistic: no news is good news. A special case corresponds to the frequently studied bad news setting of KR, which we term \textit{pure bad news}, where $\lambda_{z}^{b}>0$ and $\lambda_{z}^{g}=0$ for $z=L,H$. We refer to settings in which good and bad news arrive at precisely identical rates, $\lambda_{z}^{g}=\lambda_{z}^{b}$ for $z=L,H$, as \textit{balanced news settings}. In balanced news settings, without the arrival of news, the agent's posterior belief that the explored project is good does not change. These settings are useful as central reference cases around which we construct some of our proofs.

We assume that payoffs, which depend only on exploitation choices, are unobserved throughout the decision-making process. This assumption is a natural benchmark in pursuit of our goal of understanding the consequences of disentangling information collection from payoff-relevant actions. The assumption is also a reasonable approximation in a number of applications. For instance, the consequences of particular policy choices may become apparent only in the fullness of time.\footnote{Indeed, in the U.S., the Council of Economic Advisers is charged with ``advising the President on economic policy based on data, research, and evidence''; see \url{https://www.whitehouse.gov/cea/}.} Similarly, returns to long-run financial investments---like retirement savings---may provide weak signals regarding the future promise of underlying stocks, and investors may explore features of a variety of stocks, independent of their portfolio. Financial investment in charitable causes also frequently provides limited information on the charities' value.\footnote{In fact, there are various resources designed to assist donors explore various charities; see, e.g., \url{https://www.charitynavigator.org/}.} Last, employees can certainly observe their wages, but absent explicit queries, may not learn about their future prospects in their place of employment. Furthermore, employees can explore opportunities in their existing job, or elsewhere.

Certainly, in many applications, rewards from exploitation choices do provide some information about the quality of the undertaken projects. In order to capture such environments, as well as relate the commonly utilized exploration/exploitation model to ours, we consider the $\alpha $\textit{-constrained decision process}. In the $\alpha $-constrained decision process, whenever the agent exploits project $z=L,H$, she must allocate at least $\alpha$ to exploring it: $\alpha_{z}\geq \alpha $. When $\alpha =1$, the agent must explore the project she exploits, corresponding to the standard exploration/exploitation trade-off. When $\alpha =0$, exploration and exploitation are fully disentangled. 

Throughout, we characterize optimal policies up to measure-0 sets of time.

\bigskip

We begin with a straightforward result that highlights the fact that the option to disentangle exploration from exploitation, corresponding to any $\alpha <1$, has important implications on outcomes.

\bigskip

\begin{proposition}[Asymptotic Optimality]\label{prop0} 
For all $\alpha<1$, the agent exploits the best project asymptotically.
\end{proposition}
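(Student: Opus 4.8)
The plan is to show that, for any $\alpha<1$, with probability one the agent's exploitation choice converges to the project that is actually good (or to either, if both are bad or both good, in which case ``best'' is trivially attained asymptotically once we argue learning is complete). The engine of the argument is the observation that with $\alpha<1$ there is always \emph{slack} in the attention budget: whenever the agent exploits one project, at least a fraction $1-\alpha>0$ of her attention is not forced onto it, so she has the \emph{option} to direct a strictly positive amount of exploration toward the other project at all times. The proof should therefore proceed by a contradiction/identification argument: suppose that on a positive-probability event the limiting exploitation is the wrong project (or fails to converge). I would first argue that on this event the posterior beliefs must converge (beliefs form a bounded martingale, so they converge a.s.), and then show that the limiting belief is incompatible with optimality given the available exploration slack.

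The key steps, in order: (i) Set up the belief process. Let $p_z(t)$ be the posterior that project $z$ is good. Because news is conclusive, $p_z(t)$ jumps to $1$ on good news and to $0$ on bad news; absent a jump it drifts deterministically according to the current attention allocation and the sign of $\lambda_z^g-\lambda_z^b$. Each $p_z$ is a martingale, hence converges a.s.\ to some $p_z(\infty)\in[0,1]$. (ii) Argue that, on the event that no conclusive news about $z$ ever arrives, the total attention the agent spends on $z$ must be finite: $\int_0^\infty \alpha_z(t)\,dt<\infty$ a.s.\ on $\{\text{no news about }z\}$ — otherwise, if $z$ is good (resp.\ bad) and $\lambda_z^g>0$ (resp.\ $\lambda_z^b>0$), the probability of never getting the conclusive signal would be zero. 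This uses $\max\{\lambda_z^g,\lambda_z^b\}>0$. (iii) Combine with the budget constraint $\alpha_L(t)+\alpha_H(t)=1$: since the total attention is $1$ per unit time and $\int_0^\infty \alpha_z\,dt<\infty$ for any project about which no news arrives, at most one project can have finite cumulative attention; equivalently, if both projects never generate news, we get a contradiction unless — and here is the crux — one of them is actually such that no news \emph{can} arrive in the relevant direction. Disentangling the cases by the realized qualities then pins down which project is good. (iv) Finally, translate ``eventually the agent stops exploring project $z$ meaningfully and its belief has converged'' into a statement about exploitation: show that if the agent were, in the limit, exploiting a project that is revealed (by the convergence of beliefs plus the slack) to be dominated, she could profitably deviate by using her $1-\alpha$ slack to explore the other project and then switch — contradicting optimality. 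This last step is where the $\alpha<1$ hypothesis does the real work, and contrasts with $\alpha=1$, where exploiting the ``right'' project may be incompatible with ever learning it is right.

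The main obstacle I anticipate is step (iv): making the ``profitable deviation'' rigorous. One has to rule out the possibility that the agent rationally \emph{chooses} never to resolve her uncertainty because exploration is too slow relative to discounting — i.e., that the option value of learning is outweighed by the cost of delay. The resolution is that the claim is \emph{asymptotic}: we are not claiming the agent learns quickly, only that she learns eventually. Formally, I would argue that if the agent permanently exploited a fixed project $z$ from some time on while permanently allocating some $\varepsilon>0$ of attention to $z'\neq z$ (feasible since $\varepsilon\le 1-\alpha$), then conclusive news about $z'$ arrives a.s.\ if $z'$ is good-and-$\lambda_{z'}^g>0$ or bad-and-$\lambda_{z'}^b>0$; and if she instead allocates attention to $z$, news about $z$ arrives a.s.; either way beliefs become degenerate in the limit, so the only way beliefs can fail to become degenerate is if the agent eventually puts \emph{zero} attention on some project — but then on the complementary event beliefs about the other project become degenerate. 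A careful case analysis on $(\text{quality of }L,\text{quality of }H)$ and on the news regime (good/bad/balanced), together with the single-crossing assumption $sign(\lambda_H^g-\lambda_H^b)=sign(\lambda_L^g-\lambda_L^b)$, then yields that in every case the agent's posterior about at least the ex-post relevant comparison becomes conclusive, and a conclusive posterior combined with payoff maximization forces exploitation of the realized best project from some finite (random) time onward. I would also handle the knife-edge where the two projects are equally good ex post by noting ``best project'' is then attained trivially.
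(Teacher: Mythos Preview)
Your approach differs fundamentally from the paper's, and step (iv) contains a genuine gap that the case analysis you sketch does not close. The probabilistic facts in (i)--(iii) are fine: posteriors are bounded martingales and converge; under \emph{any} policy at least one project receives infinite cumulative attention; and on the no-news event for that project its posterior degenerates in the appropriate direction. But these facts hold for every policy, optimal or not, and they do \emph{not} by themselves pin down asymptotic exploitation. Concretely: in a pure good news setting, take a policy that explores $L$ forever. On $\{L\text{ good},\,H\text{ good}\}$ the agent gets good news on $L$, learns $L$ is good, and---if she keeps exploring $L$---never updates about $H$; when $R_L>p_HR_H$ she then exploits $L$ forever, which is the wrong project. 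To rule this out you must \emph{use} optimality: once $L$ is known, exploring $L$ is worthless and the optimal policy redirects the $1-\alpha$ slack to $H$. That is exactly what your ``profitable deviation'' is supposed to deliver, but you never construct the deviation or compare its value to the candidate. The sentence ``the agent's posterior about at least the ex-post relevant comparison becomes conclusive'' is the conclusion to be proved, not a consequence of the martingale and attention bookkeeping; and the single-crossing assumption on $\lambda^g-\lambda^b$ plays no role in rescuing it.

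The paper bypasses all pathwise case analysis with a short value-function squeeze. Let $U_t$ be the optimal continuation value, $M_t$ the myopic (favorable-project) payoff given time-$t$ information, $m_t=\mathbf{E}M_t$, and $V$ the full-information payoff. Since $m_t$ is nondecreasing and bounded by $\mathbf{E}V$, it has a limit $m_\infty$. For any $\varepsilon>0$ there is a $T$ such that splitting attention equally for duration $T$ yields continuation payoff at least $\mathbf{E}V-\varepsilon$; because $1-\alpha>0$ of attention is always free, the agent can accomplish this by time $t+T/(1-\alpha)$ while exploiting myopically in the interim, giving
\[
(1-e^{-rT/(1-\alpha)})\,m_t+e^{-rT/(1-\alpha)}(\mathbf{E}V-\varepsilon)\ \le\ \mathbf{E}U_t\ \le\ m_\infty.
\]
Letting $t\to\infty$ and then $\varepsilon\downarrow 0$ forces $m_\infty=\mathbf{E}V$, hence $\mathbf{E}U_t\to\mathbf{E}V$; since $U_t$ is a bounded submartingale (the value function is convex in beliefs, which are martingales), $U_t\to V$ almost surely. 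The ``deviation'' you were reaching for is precisely this exploit-myopically-while-using-slack-to-learn strategy, and the whole argument is a two-line comparison of expectations rather than a case split over realized qualities---which is also why the paper can note that the proof extends to any number of projects and any payoff process.
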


\bigskip

Proposition \ref{prop0} offers a fundamental contrast between our environment and the standard setup, where it is well known that the agent's exploitation need not converge to the ex-post optimal project.

The proof of Proposition \ref{prop0} holds for any number of projects and any payoff process. To prove this result, we need to show that the agent will eventually explore projects for a sufficiently long time so as to learn to exploit the best one. Now, an impatient agent might prefer an exploration strategy that is more efficient in the short run. Assume, for instance, that $p_{L}$ is close to $1$, while $p_{H}$, $\lambda_H^g$, and $\lambda_H^b$ are low. In the long run, the agent benefits from exploring project $H$. In the short run, however, exploring project $H$ is not useful since, in expectation, it would take a long time to conclude that project $H$ is good with sufficient likelihood to exploit it. In fact, in the classical environment, if project $L$ is known to be good, a sufficiently impatient agent would never learn that project $H$ is good as well. With $\alpha<1$, the impatient agent may still explore project $L$ initially: if $\lambda_L^b$ is sufficiently high, the agent might initially explore project $L$ since bad news will lead her to switch her exploited project. However, as we show in the proof, at some point, the short-run benefit from continuing to explore project $L$ diminishes enough so that even an impatient agent will prefer to explore project $H$.

Proposition \ref{prop0} also underscores the importance of our assumption that the agent is long-lived. If we replace our agent with a sequence of short-lived agents, each of whom lives for a fixed duration, then it may be that they all prefer to explore project $L$ since neither will stick around long enough to benefit from exploring project $H$. \cite{liang2020complementary} call this phenomenon a \emph{learning trap}.

\section{One Safe Project\label{SafeProject}}

As already noted, a heavily studied exploration/exploitation setting is that introduced by KRC, where project $L$ is ``safe:'' $p_{L}=1$. This setting is used in many applications and is a special case of our environment.

\subsection{Optimal Policy with a Safe Project}

With one safe project, optimal exploration is trivial. Because uncertainty is present only for project $H$, the agent explores project $H$ as much as she can (with at least $1-\alpha$ units of attention).\footnote{The results are the same if there is an exogenous baseline arrival rate of news on the risky project that is independent of the exploited project, where the exploitation decision generates additional information.} The choice of exploitation is less obvious. A myopic agent would exploit project $H$ when it has a higher expected value, whenever her posterior that project $H$ is good exceeds $p_{M}=\frac{R_{L}}{R_{H}}$. With $\alpha=0$, the agent exploits project $H$ only when it is myopically optimal, namely when $p_H \geq p_M$. When $\alpha >0$, exploiting project $H$ garners an informational advantage as it allows the agent to explore project $H$ and learn at higher rates: she can dedicate her full attention to project $H$ instead of only a fraction $1-\alpha$ of it. The agent may then exploit project $H$ at even lower posteriors than $p_{M}$, an instance of the exploration/exploitation trade-off. The following proposition characterizes the optimal exploitation strategy.\footnote{The result is essentially implied by a combination of results in \cite{che2018recommender}, although they study a different set of questions. Our method of proof is different and, we believe, instructive.}

\bigskip

\begin{proposition}[One Safe Project: Optimal Exploitation]\label{prop1} 
Let $\lambda =\max \{\lambda_{H}^{g},\lambda_{H}^{b}\}$. For any $\alpha \in \lbrack 0,1]$, the agent optimally exploits project $H$ whenever her posterior that project $H$ is good exceeds $\bar{p}\left( \alpha \right)$, where \begin{equation*} \bar{p}\left( \alpha \right) =\frac{\left( r+\lambda \left( 1-\alpha \right) \right) R_{L}}{(r+\lambda )R_{H}-\lambda \alpha R_{L}}. \end{equation*} The cutoff $\bar{p}(\alpha )\leq \frac{R_{L}}{R_{H}}$ is decreasing in $\alpha$ and $R_{H}/R_{L}$, and increasing in $r$. When $\alpha >0$, it is decreasing in $\lambda$. 
\end{proposition}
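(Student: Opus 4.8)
The object to characterize is a threshold policy for exploitation in the one-safe-project case, where exploration is pinned down (always explore $H$ with the maximal attention consistent with the $\alpha$-constraint). So the only choice variable is the exploitation decision $k_H\in\{0,1\}$ (I would first argue the optimal $k_H$ is a corner, since the value function will turn out to be convex in the posterior and the flow payoff is linear). The plan is to set up the HJB / Bellman equation for the agent's value function $V(p)$ as a function of the posterior $p$ that $H$ is good, in continuous time, and look for a threshold $\bar p$ above which the agent exploits $H$ (and explores with full attention $1$) and below which she exploits $L$ (and explores $H$ with attention $1-\alpha$). I would then verify the proposed policy by value-matching and smooth-pasting at $\bar p$, solve for $\bar p$, and confirm it lands at or below $p_M=R_L/R_H$, which makes the guessed structure internally consistent.

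**Key steps.** First, derive the law of motion of the posterior $p_t$ under pure Bayesian updating with conclusive Poisson news: in a good-news world ($\lambda^g>\lambda^b$) no news drifts $p$ downward; in a bad-news world it drifts up; in either case, once news arrives the problem is over (the agent learns $H$'s type conclusively and the continuation value is trivial — either $R_H/r$ times the posterior-free value, or she sticks with $L$). The key simplification is that only $\lambda=\max\{\lambda_H^g,\lambda_H^b\}$ matters: in the region where the agent does not want to react to the "non-conclusive" type of news (e.g. good news in a bad-news setting is what keeps her on $H$; bad news is what she's waiting for), the drift and jump terms combine so that the relevant hazard is just $\lambda$ times the attention. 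I would write the ODE for $V$ on the exploitation-$H$ region (attention $=1$, hazard $\lambda\cdot 1$) and on the exploitation-$L$ region (attention $=1-\alpha$, hazard $\lambda(1-\alpha)$, plus the safe flow $R_L$). Second, impose continuity of $V$ and of $V'$ at $\bar p$ (smooth pasting), which yields an equation pinning down $\bar p$; algebra should collapse it to the stated closed form $\bar p(\alpha)=\frac{(r+\lambda(1-\alpha))R_L}{(r+\lambda)R_H-\lambda\alpha R_L}$. Third, a verification argument: show the candidate $V$ satisfies the full HJB with inequalities in both regions (i.e. deviating is suboptimal), so the guessed policy is optimal. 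Fourth, the comparative statics are a direct differentiation of the closed form: monotonicity in $\alpha$, in $R_H/R_L$, in $r$, and — for $\alpha>0$ — in $\lambda$, together with the inequality $\bar p(\alpha)\le R_L/R_H$ (check: at $\alpha=1$, $\bar p=\frac{rR_L}{(r+\lambda)R_H-\lambda R_L}$, which one shows is $\le R_L/R_H$ iff $rR_H\le (r+\lambda)R_H-\lambda R_L$, i.e. $\lambda R_L\le\lambda R_H$, true).

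**Main obstacle.** The delicate part is not the algebra but handling good-news versus bad-news settings uniformly, and making sure the "waiting region" versus "absorbing region" structure is right in each case. In a good-news setting, starting above $\bar p$ the posterior drifts down, so the agent eventually crosses $\bar p$ and switches to exploiting $L$ absent news — so the $H$-exploitation region is an interval $[\bar p,1]$ that the path exits from below; the ODE there is a genuine transport equation with a moving boundary reached in finite time, and I must track the boundary value correctly. In a bad-news setting, above $\bar p$ the posterior drifts up toward $1$, so the agent never switches absent news, and the relevant "bad event" is the conclusive bad-news jump; below $\bar p$ the posterior drifts up and the agent eventually crosses into the $H$-region. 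I would argue that in both cases the value functions and the smooth-pasting condition produce the same expression because the only state-dependent hazard that enters the binding incentive comparison at $\bar p$ is $\lambda$ — this is the conceptual crux and the place where I would be most careful to state the argument cleanly rather than lean on symmetry. A secondary obstacle is justifying that $k_H$ is a bang-bang control (ruling out interior exploitation), which follows from linearity of the flow payoff in $k$ and convexity of $V$, but needs to be stated. Finally, the boundary behavior at $p=1$ (where $H$ is known good) and at $p\to 0$ must be checked to anchor the ODE solutions.
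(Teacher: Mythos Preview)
Your HJB/smooth-pasting plan is sound in principle and would get you to the same formula, but the paper takes a genuinely different and more elementary route. Rather than writing ODEs in each region and pasting, the paper first solves the \emph{balanced} case $\lambda_H^g=\lambda_H^b=\lambda$, where the posterior is static absent news; the problem is then literally a comparison of two constant policies (exploit $H$ forever versus exploit $L$ forever while exploring $H$ at rate $1-\alpha$), and equating their values gives $\bar p(\alpha)$ directly. For the good-news case, the paper constructs an auxiliary problem in which, whenever the agent exploits $L$, news on $H$ is artificially balanced at rate $\lambda_H^g$; the candidate threshold policy earns identical payoffs in the original and auxiliary problems (only good news on $H$ is decision-relevant when exploiting $L$), and since the auxiliary problem is weakly more informative, optimality transfers. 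A one-line perturbation argument (exploit $H$ for an instant $\Delta$ and let $\Delta\to 0$) pins the threshold from the other side. The bad-news case is symmetric.

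What this buys relative to your approach: no ODEs, no boundary analysis, and the good/bad news cases are handled uniformly by the same auxiliary-problem trick, which is precisely the ``conceptual crux'' you identify as the main obstacle. It also makes immediately transparent \emph{why} only $\lambda=\max\{\lambda_H^g,\lambda_H^b\}$ enters: the slower rate is never decision-relevant at the margin. One concrete place your plan would need patching is the bad-news case at $\alpha=1$: there the posterior does not move when exploiting $L$, the value function is flat below the cutoff, and smooth pasting fails (KR's noted kink); you would need to fall back on value matching alone or take a limit $\alpha\uparrow 1$. The paper's comparison argument sidesteps this entirely.
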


\bigskip

Although the cutoff $\bar{p}\left( \alpha \right) $ does not depend on whether good news or bad news arrive at higher rate, provided the maximal news arrival rate $\lambda $ remains constant, the optimal policy differs between the two settings. In good news settings, if no news arrives, any amount of exploration of project $H$ leads the agent to grow increasingly pessimistic about project $H$. If the agent starts by exploiting project $L$, she switches to exploiting project $H$ only upon receiving good news. If the agent starts by exploiting project $H$, after a sufficiently long time without news, the agent becomes sufficiently pessimistic about that project that she switches to exploiting project $L$. In contrast, in bad news settings, if no news arrives, any amount of exploration of project $H$ leads the agent to grow increasingly optimistic about project $H$. Therefore, if the agent starts by exploiting project $L$, absent bad news, she switches to exploiting project $H$ at some point. If she starts by exploiting project $H$, she never switches unless bad news arrives.

The KRC and KR cutoffs correspond to $\bar{p}\left( 1\right) $. As $\alpha$ decreases, the link between exploration and exploitation is relaxed and $\bar{p}\left( \alpha \right)$ approaches the myopic cutoff $p_{M}$. When $\frac{R_{H}}{R_{L}}$ increases, gaining information on whether project $H$ is good becomes more valuable and the cutoff $\bar{p}\left( \alpha \right)$ moves away from $p_{M}$. Last, as $\lambda$ increases, exploration of project $H$ becomes more appealing as it is expected to yield a conclusive signal more quickly. Again, the optimal cutoff $\bar{p}\left( \alpha \right)$ moves away from $p_{M}$.

In order to glean intuition for the derivation of the optimal cutoff, consider a good news setting. For any posterior $p$ such that $pR_{H}\geq R_{L}$, it is certainly optimal for the agent to exploit project $H$: it generates higher expected payoffs and delivers more information. Assume then that $pR_{H}<R_{L}$. Call $\sigma _{L}$ the strategy that specifies exploiting project $L$ until news, and $\sigma_{\Delta}$ an alternative strategy that prescribes exploiting project $H$ for a short time interval $\Delta>0$ before returning to exploiting project $L$ in the event that there is no news. The difference in payoffs between these two strategies is given by: 
\begin{equation} 
-\Delta r\left( R_{L}-pR_{H}\right) +\left( 1-\Delta r\right) p\lambda\Delta \alpha \frac{r}{r+\left( 1-\alpha \right) \lambda }\left(R_{H}-R_{L}\right) +O(\Delta^2). \label{GNDeviation} 
\end{equation}

The first term in equation (\ref{GNDeviation}) is the expected flow payoff difference between exploiting projects $L$ and $H$. The second term is the expected discounted present value of information that reflects the possibility that, in the time interval $\Delta$, the agent receives good news and optimally switches to exploiting project $H$. The arrival rate of bad news appears only in a term corresponding to the discounted flow payoff during the interval of length $\Delta$ if bad news is received from project $H$ (the agent intends to switch back to project $L$ absent news). Since the probability of such news, when project $H$ is bad, is $O(\Delta)$, the corresponding term is $O(\Delta^2)$. At the cutoff $\bar{p}\left( \alpha \right) $, taking limits as $\Delta \rightarrow 0$, our proof illustrates that the expression in equation (\ref{GNDeviation}) approaches $0$. This yields the formula appearing in Proposition \ref{prop1}. 

An analogous construction holds for bad news settings, and the resulting cutoff depends on the maximal arrival rate for both good news and bad news settings. In particular, the cutoff corresponding to $\lambda_H^i>\lambda_H^j$, where $i,j \in \{g,b\}$ is the same as the cutoff corresponding to a setting with $\lambda_H^i$ and $\lambda_H^j=\lambda_H^i - \epsilon$, with $\epsilon>0$ as small as desired. It follows that the cutoff corresponding to a good news setting with good news arriving at a rate of $\lambda$ is the same as the cutoff for a balanced news setting with arrival rate of $\lambda$. Similarly, the cutoff corresponding to a bad news setting with bad news arriving at a rate of $\lambda$ is also the same as the cutoff for a balanced news setting with arrival rate of $\lambda$. Thus, the cutoff formulas for both good and bad news settings must coincide.

\subsection{Payoff Consequences of Disentanglement}

Relaxing the entanglement constraint by reducing $\alpha$ can only improve the agent's expected payoff. We now identify features of the environment that make disentanglement particularly valuable.

Certainly, when $R_{H}/R_{L}$ increases, the benefits of learning without forgoing payoffs are larger. Therefore, the value of disentanglement increases in $R_{H}/R_{L}$. In what follows, we inspect the dependence of payoffs on other parameters.

For any project rewards $R_{L}$ and $R_{H}$, denote by $\Pi (p_{H},r/\lambda;\alpha )$ the agent's expected payoff for the environment's parameters, an analytical formulation of which appears in the Appendix. To quantify the impacts of disentanglement, we focus on the two extreme cases, $\alpha =0$ and $\alpha =1$, and consider the normalized payoff difference: 
\begin{equation*} 
\Delta \Pi (p_{H},r/\lambda )=\frac{\Pi (p_{H},r/\lambda ;0)-\Pi(p_{H},r/\lambda ;1)}{p_{H}R_{H}+(1-p_{H})R_{L}}, 
\end{equation*} 
where the denominator represents the ex-ante value of the full information payoff and serves as a natural normalization factor. In Figure \ref{Figure1}, we depict $\Delta\Pi (p_{H},r/\lambda )$ for various parameters, focusing on the pure good and bad news settings, where $\lambda_{H}=\max \{\lambda_{H}^{g},\lambda_{H}^{b}\}$ and $0=\min \{\lambda_{H}^{g},\lambda_{H}^{b}\}$.

\begin{figure}[t]
\includegraphics[width=1\columnwidth]{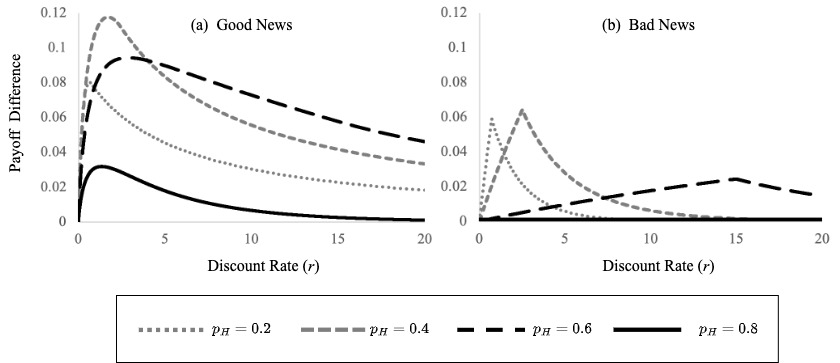}
\vspace{1mm}
\caption{Payoff value of disentanglement for (a) pure good news settings, and (b) pure bad news settings when $R_{L}=10$, $R_{H}=15$, and $\lambda_{H}=5$\label{Figure1}}
\end{figure}

As can be seen, the benefit of disentanglement is non-monotonic with respect to the discount rate $r$, and equivalently, with respect to the arrival rate $\lambda$ of good news. Intuitively, when the agent is very patient ($r\rightarrow 0$) or when news arrives rapidly ($\lambda \rightarrow \infty $), regardless of $\alpha$, the agent can accumulate information with no substantial payoff consequences. Even in the classical environment, the agent may suffer payoff losses because she exploits the risky project for a long time, but the payoff consequences are minimal when the agent is very patient. The benefit of disentanglement is therefore small. When the agent is very impatient ($r\rightarrow \infty $) or when news arrives slowly ($\lambda \rightarrow 0$), short-run, or myopic payoffs approximate the agent's payoffs regardless of the level of disentanglement, which hence has little impact. It follows that the payoff consequences of disentanglement can be meaningful only for intermediate values of $r/\lambda $.

As Figure \ref{Figure1} illustrates, the effects of $p_{H}$ are also non-monotonic. Consider first good news settings (depicted in the left panel). Suppose $p_{H}\leq \bar{p}\left( 1\right) $, so that the probability that project $H$ is good is lower than the cutoff in the classic environment. Regardless of the disentanglement level $\alpha$, project $L$ is exploited. The value of disentanglement is then only due to the ability to continue collecting information; it is increasing in the prior $p_{H}$ that project $H$ is good. When $p_{H}>\bar{p}\left( 0\right) =R_{L}/R_{H}$, regardless of the disentanglement level $\alpha$, project $H$ is explored and exploited. Disentanglement is then beneficial only due to the continuation value in the eventuality that no news arrives and the posterior falls below $R_{L}/R_{H}$ when a sufficiently long period transpires without news. The probability of no news is decreasing in $p_{H}$. Thus, the value of disentanglement is decreasing in the region $(\bar{p}\left( 0\right) ,1)$. Consequently, the ability to disentangle exploration from exploitation is most valuable in the $(\bar{p}\left( 1\right) ,\bar{p}\left( 0\right) )$ region. In this region, when $\alpha =1$, the agent exploits a sub-optimal project for its exploration value. Disentanglement limits the payoff loss associated with such exploration.

Consider now bad news settings (depicted in the right panel). As in good news settings, when $p_{H}<\bar{p}\left( 1\right) $, regardless of $\alpha $, project $L$ is exploited. The value of disentanglement is due to the information it affords. This value is increasing in the prior likelihood that project $H$ is good. When $p_{H}>\bar{p}\left( 1\right) $, in the classical environment, with $\alpha =1$, the agent explores and exploits project $H$. Absent news, the agent becomes increasingly optimistic and continues exploring project $H$. This persistence in the explored and exploited project generates a kink in payoffs, noted by KR, which yields the kink seen in Figure \ref{Figure1}. Disentanglement leads the agent to exploit project $L$ for posteriors higher than $\bar{p}\left( 1\right) $, when its expected payoffs are higher than those from project $H$. The benefit from doing so decreases with the probability that project $H$ is, in fact, the better project. When $p_{H}>\bar{p}\left( 0\right) $, regardless of the level of disentanglement, the agent explores and exploits project $H$ and switches the project she exploits only upon seeing bad news. Thus, expected payoffs are independent of $\alpha$ in the region $(\bar{p}\left( 0\right),1)$.

The following corollary summarizes our discussion.

\bigskip

\begin{corollary}[One Safe Project: Comparative Statics]\label{corr1} 
The disentanglement value $\Delta \Pi (p_{H},r/\lambda )$ is non-monotonic in each of its arguments. It is maximized at $p_{H}^{\ast }$ such that $\bar{p}\left( 1\right) <p_{H}^{\ast }<\bar{p}\left(0\right)$ in good news settings and at $p_{H}=\bar{p}\left( 1\right)$ in bad news settings. 
\end{corollary}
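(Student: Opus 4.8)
The plan is to derive the corollary directly from the closed-form payoff function $\Pi(p_H, r/\lambda; \alpha)$ that is constructed (per the text) in the Appendix, by partitioning the prior range $[0,1]$ into the three regions delineated by the two cutoffs $\bar p(1) < \bar p(0) = R_L/R_H$ and analyzing $\Delta\Pi$ separately on each. First I would establish the three-region structure for \emph{good news}. On $[0,\bar p(1)]$, the optimal policy under both $\alpha=0$ and $\alpha=1$ exploits $L$ forever unless news arrives, so $\Delta\Pi$ is the difference in value coming purely from the informational channel (faster learning about $H$ when $\alpha=0$); I would show this is strictly increasing in $p_H$ — intuitively the event "$H$ is good and good news arrives, triggering a switch" has probability increasing in $p_H$, and the payoff gain from such a switch is a fixed positive quantity $(R_H-R_L)$ times discount factors. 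On $(\bar p(0),1)$, both policies explore and exploit $H$ from the start; the only difference is the continuation once the posterior (absent news) decays below $\bar p(0)$, an event whose probability is the probability of no news, which is decreasing in $p_H$, so $\Delta\Pi$ is strictly decreasing there. Since $\Delta\Pi$ is continuous (both $\Pi(\cdot;0)$ and $\Pi(\cdot;1)$ are), increasing on the left region and decreasing on the right region, its maximizer $p_H^\ast$ lies in the closed middle region $[\bar p(1),\bar p(0)]$; ruling out the endpoints (so that $\bar p(1)<p_H^\ast<\bar p(0)$) requires checking that $\Delta\Pi$ is still strictly increasing at $\bar p(1)^+$ and strictly decreasing at $\bar p(0)^-$, which follows because the middle-region behavior inherits the informational-gain term (still increasing) plus the new term capturing the loss avoided by not exploiting the inferior $H$ (which has a strictly negative derivative only after $\bar p(0)$ is approached). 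Non-monotonicity in $r/\lambda$ then follows from the two limiting arguments already given in the text — $\Delta\Pi\to 0$ as $r/\lambda\to 0$ and as $r/\lambda\to\infty$, while $\Delta\Pi>0$ for intermediate values (e.g. any interior $p_H$ in the middle region witnesses positivity) — so $\Delta\Pi$ cannot be monotone in $r/\lambda$.

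For \emph{bad news}, I would argue similarly but with the middle and right regions behaving differently. On $[0,\bar p(1))$ the analysis is as before: both policies exploit $L$ absent news, and $\Delta\Pi$ is increasing in $p_H$ through the informational channel. On $(\bar p(0),1)$, both policies explore and exploit $H$ and — since no news is good news — never switch absent bad news; hence the two policies coincide and $\Delta\Pi\equiv 0$ there. On the middle region $(\bar p(1),\bar p(0))$, under $\alpha=1$ the agent exploits the inferior $H$ for its exploration value (generating the KR kink), whereas under $\alpha=0$ she exploits $L$; the payoff wedge is positive but decreasing in $p_H$ (the benefit of avoiding exploitation of $H$ shrinks as $H$ becomes more likely to be the better project). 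Combining: $\Delta\Pi$ increases up to $\bar p(1)$, then decreases on $(\bar p(1),\bar p(0))$, then is flat at $0$; hence the maximum is attained exactly at $p_H=\bar p(1)$, which is the kink point. Non-monotonicity in $r/\lambda$ follows from the same limiting argument.

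The main obstacle I anticipate is making the monotonicity claims on each region rigorous from the explicit formula for $\Pi$, rather than from the heuristic "probability of a switch" language: the payoff function involves the posterior belief path $p_H(t)$ (which solves a simple ODE under no-news, differing between good and bad news), the optimal stopping/switching time, and nested discounted expectations, so the derivatives $\partial\Delta\Pi/\partial p_H$ are ratios of moderately complicated expressions. I would handle this by differentiating region-by-region in closed form and reducing each sign determination to an inequality among the primitives $r,\lambda,R_L,R_H$, using $\bar p(\alpha)\le R_L/R_H$ and its monotonicity properties from Proposition \ref{prop1} to sign the boundary terms. A secondary technical point is continuity of $\Delta\Pi$ at the cutoffs $\bar p(1)$ and $\bar p(0)$, which is immediate since the value function of a well-posed dynamic optimization is continuous in the prior, so the pieces glue without jumps (though, as noted, the \emph{derivative} need not — indeed the kink at $\bar p(1)$ in the bad news case is exactly what pins down the maximizer there).
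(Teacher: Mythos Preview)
Your proposal is correct and follows essentially the same approach as the paper. The paper does not give a separate formal proof of Corollary~\ref{corr1}; it is presented as a summary of the discussion in Section~\ref{SafeProject}.2, which proceeds exactly as you do---partitioning $[0,1]$ by the cutoffs $\bar p(1)<\bar p(0)$, arguing monotonicity of $\Delta\Pi$ region by region (increasing on the left, decreasing on the right, with the middle region pinning down the maximizer), and invoking the limiting arguments $r/\lambda\to 0,\infty$ for non-monotonicity in the second argument. Your plan to make the region-wise monotonicity rigorous via the closed forms in Propositions~A and~B goes somewhat beyond what the paper actually carries out, but the structure and logic are identical.
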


\bigskip

In terms of the degree of disentanglement $\alpha$, increasing it tightens the agent's constraint, and this reduces her expected payoffs. However, the relationship between expected payoffs and $\alpha$ is neither concave nor convex. To see this, consider, for instance, the balanced news setting. For any $p_{H}\in(\bar{p}\left( 1\right) ,\frac{R_{L}}{R_{H}})$, there exists $\alpha ^{\ast} $ such that $\bar{p}\left( \alpha ^{\ast }\right) =p_{H}$. Using the monotonicity of $\bar{p}\left( \cdot \right)$ in Proposition \ref{prop1}, at the outset, the agent exploits the risky project $H$ for any $\alpha >\alpha^{\ast }$. Furthermore, in a balanced news setting, the only way the agent updates her posterior, and changes her exploited project, is by receiving news. Therefore, the agent's expected payoffs are constant in $\alpha$ for $\alpha >\alpha ^{\ast }$. However, for $\alpha <\alpha ^{\ast }$, expected payoffs are strictly decreasing and concave in $\alpha $; see the Appendix for details.\footnote{When $\lambda=\lambda_{H}^g = \lambda_{H}^b$ and $\alpha <\alpha ^{\ast }$, the resulting expected payoffs are given by:
\begin{equation*}
R_{L}+\frac{\lambda(1-\alpha)}{r+\lambda(1-\alpha)}p_{H}(R_{H}-R_{L}),
\end{equation*}
which is concave in $\alpha$.} In particular, expected payoffs are neither concave nor convex in $\alpha$ over the interval $[0,1]$.

\section{Two Risky Projects \label{PoissonBandits}}

We now analyze the general case of two risky projects, where $p_{L},p_{H}\in (0,1)$. For tractability, we assume full disentanglement, $\alpha =0$. In this case, the agent's optimal exploitation choices are simple: she always chooses the myopically optimal project, which we term the \textit{favorable} project. That is, project $x$ is favorable, while project $y$ is unfavorable, if $p_x R_x > p_y R_y$. Both projects are favorable when their expected values coincide. The focus of our analysis is, therefore, on the characterization of optimal exploration. We show that the optimal policy entails very few switches of either the exploited or the explored project. However, unlike the special case in which one project is safe, the information structure has a substantial impact on the characterization of the optimal policy. Furthermore, the optimal policy cannot be characterized via an index \`{a} la \cite{gittins1979bandit}.

We divide our analysis into three subcases. We first discuss balanced news settings. We then consider good news settings. We conclude with our analysis of bad news settings. In all these settings, if the agent receives news that project $H$ is good, then there is no additional value of exploration. If the agent receives news that project $L$ is good, the optimal policy proceeds as described in Section \ref{SafeProject}. Therefore, in what follows, we emphasize features of the optimal policy before news arrives.

\subsection{Balanced News Settings\label{BalancedNews}}

We start by analyzing balanced news settings in which $\lambda_{z}^{b}=\lambda_{z}^{g}=\lambda_{z}$ for $z=L,H$. The analysis of these settings proves instrumental for the characterization of optimal policies in good and bad news settings, which follow. Substantively, while rarely studied in the literature, these settings reflect environments in which the arrival rate of news does not depend on its valence. For example, when assessing the efficacy of a menu of medical treatments using clinical trials, the arrival rate of news depends on the number of patients and the rate at which they are treated, but not necessarily on the quality of the treatments per se. Similarly, when researching the promise of an investment opportunity, the arrival rate of news often depends on the scope and speed of investigation, not explicitly on the quality of the investment option.

Suppose the agent optimally explores one of the projects at the outset. Absent news, the agent's posterior probabilities and, therefore, her decision problem do not change. In particular, in the optimal policy, the agent does not switch the project she explores unless news arrives. The agent's exploration choice is then effectively a static problem corresponding to her decision of which project to start exploring at the outset.

In order to characterize the optimal policy, it is useful to consider a modification of the probability that any project $x=L,H$ is good, which we denote by $\tilde{p}_{x} \geq p_{x}$. When project $x$ is favorable, we define $\tilde{p}_{x}\equiv p_{x}$. When project $x$ is unfavorable, we define $\tilde{p}_{x}\equiv \min (p_{y}R_{y}/R_{x},1)$.\footnote{In this case, $p_{y}R_{y}\geq p_{x}R_{x}$, and thus $\tilde{p} _{x}\geq p_{x}$.} When the agent is indifferent between exploiting either project myopically, so that both projects are favorable, the two definitions coincide.

\bigskip

\begin{proposition}[Optimal Exploration in Balanced News Settings]\label{prop2}
Suppose $\lambda_{z}^{b}=\lambda_{z}^{g}=\lambda_{z}$ for $ z=L,H$. Any optimal exploration strategy entails exploring project  $x$ until good news arrives, where $\lambda_{x}(1-\tilde{p} _{x})\geq \lambda_{y}(1-\tilde{p}_{y})$, with $y\neq x$.  
\end{proposition}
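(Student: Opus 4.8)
The plan is to turn this into a static comparison between two candidate policies, ``explore $L$ until news arrives'' and ``explore $H$ until news arrives'', evaluate both in closed form, and check which one wins. \emph{Step 1 (reduction to stationary, pure exploration).} In a balanced news setting the posterior on the explored project does not drift absent news, and the posterior on the unexplored project never moves, so along every no-news history the agent faces the identical decision problem. The problem is therefore time-homogeneous, and it is without loss to consider stationary policies that, absent news, fix an attention split $(\alpha_L,\alpha_H)$ with $\alpha_L+\alpha_H=1$. Write $V_0=\max\{p_LR_L,p_HR_H\}$ for the exploitation flow obtained by always exploiting the favorable project (optimal since $\alpha=0$ renders exploitation purely myopic), and $N_z=p_zG_z+(1-p_z)B_z$ for the expected continuation value once conclusive news about $z$ arrives, where $G_z,B_z$ are the continuations after good and bad news; crucially $N_z$ does not depend on $(\alpha_L,\alpha_H)$, since conclusive news reveals $z$'s type. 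A one-step argument then gives
\[ W=\frac{V_0+\alpha_L\lambda_L N_L+\alpha_H\lambda_H N_H}{\,r+\alpha_L\lambda_L+\alpha_H\lambda_H\,}. \]
As a function of $\alpha_L\in[0,1]$ with positive denominator this is linear-fractional, hence monotone, so it is maximized at an endpoint: the agent optimally puts all attention on a single project until news arrives. It remains only to compare $W_L$ and $W_H$, the values of the two pure policies.

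\emph{Step 2 (continuation values).} Conclusive good news about $H$ makes $H$ a known-good project paying the maximal reward, so the agent exploits $H$ forever and $G_H=R_H/r$. Conclusive bad news about the explored project $x$ makes $x$ worthless: the agent exploits the other project $y$ permanently, and any remaining exploration is payoff-irrelevant, so $B_x=p_yR_y/r$. The only substantive branch is good news about $L$ while $L$ is explored: $L$ becomes safe, and the continuation is exactly the one-safe-project problem of Section~\ref{SafeProject} with $\alpha=0$. Because $L$ being favorable — or, when $H$ is favorable, the subcase $p_HR_H<R_L$ — forces the posterior on $H$ strictly below $\bar p(0)=R_L/R_H$, Proposition~\ref{prop1} says the agent then exploits $L$ and explores $H$, which pins down $G_L$ by another one-step recursion. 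In the complementary subcase ($H$ favorable and $p_HR_H\ge R_L$), good news about $L$ does not trigger a switch of the exploited project, and $G_L$ is computed directly. This yields explicit formulas for $N_L$ and $N_H$ in the primitives.

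\emph{Step 3 (comparison and conclusion).} The recursion gives $W_x-V_0/r=\frac{\lambda_x(rN_x-V_0)}{r(r+\lambda_x)}$, so $W_H\ge W_L$ iff $\frac{\lambda_H(rN_H-V_0)}{r+\lambda_H}\ge\frac{\lambda_L(rN_L-V_0)}{r+\lambda_L}$. Substituting the explicit $N_L,N_H$, the factors $r+\lambda_\cdot$ cancel against the structure of the continuation values and a common positive factor divides out, and the inequality collapses to $\lambda_H(1-\tilde p_H)\ge\lambda_L(1-\tilde p_L)$, with $\tilde p_x=p_x$ for the favorable project and $\tilde p_x=\min\{p_yR_y/R_x,1\}$ for the unfavorable one — the truncation at $1$ being precisely the subcase in which exploring $L$ is dominated. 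Combined with Step~1 (so that every optimal policy explores a single project until news, ties arising only when the two quantities $\lambda_x(1-\tilde p_x)$ coincide, in which case any choice or mixture is optimal), this is the claim. The conceptual crux is the time-homogeneity argument of Step~1 together with the linear-fractional observation ruling out interior attention splits; the rest is bookkeeping, and the one place that demands genuine care is Step~2 — correctly identifying and solving each continuation sub-problem, in particular invoking Proposition~\ref{prop1} on the ``good news about $L$'' branch and handling the case split according to whether the favorable project's expected value exceeds the other project's reward.
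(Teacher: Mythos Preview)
Your argument is correct and rests on the same core observation as the paper: balanced news leaves posteriors unchanged absent news, so the problem is stationary and reduces to a static comparison of two pure exploration choices. The paper's bookkeeping is slightly different, however. Rather than computing continuation values $N_z$ that fold in subsequent optimal exploration (and invoking Proposition~\ref{prop1} on the good-news-about-$L$ branch), the paper works with the \emph{flow} payoffs at each information stage---$e_0=\max\{p_LR_L,p_HR_H\}$ before any news, $e_z$ when only project $z$'s quality is known, and $e^\ast$ under full information---and writes the value of ``explore $x$ then $y$'' directly as $(1-\rho_x)e_0+\rho_x(1-\rho_y)e_x+\rho_x\rho_y e^\ast$ with $\rho_z=\lambda_z/(r+\lambda_z)$. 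In the comparison, the $\rho_x\rho_y e^\ast$ term is common to both orderings and drops out, yielding $\lambda_x(e_x-e_0)\ge\lambda_y(e_y-e_0)$ without ever needing to solve the post-news continuation. Your decomposition is equivalent (your $rN_x-V_0$ equals $(1-\rho_y)(e_x-e_0)+\rho_y(e^\ast-e_0)$, and the $e^\ast$ terms cancel in the comparison just as you say), but the paper's three-phase flow decomposition is a bit cleaner because it sidesteps the case analysis in your Step~2. Conversely, your linear-fractional argument in Step~1 explicitly rules out interior attention splits, a point the paper leaves implicit.
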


\bigskip

Intuitively, the agent selects the project that is most ``informative.'' A higher arrival rate of news certainly increases the appeal of exploring a project. In addition, information is useful only when it affects exploitation decisions. When the agent explores the favorable project, only bad news triggers a switch in exploitation. Bad news on project $x$ can arrive only for a bad project $x$, which occurs with probability $1-p_{x}$. In contrast, exploration of an unfavorable project $y$ may or may not lead to a change in exploitation choices, even if good news arrives. Indeed, if the agent is sufficiently optimistic about project $x$, good news on project $y$ would not sway her exploitation choices. In such cases, exploring project $y$ 
before learning the quality of project $x$ is of no value. 
Hence, the probability adjustment factor in the proposition, which raises the hurdle for unfavorable projects.

The optimal exploration strategy is generally unique, with two exceptions. First, whenever the knife-edge condition that $\lambda_{x}(1-\tilde{p}_{x})=\lambda_{y}(1-\tilde{p}_{y})$ for $y\neq x$ holds, any exploration strategy is optimal. Second, if project $H$ is explored and good news arrives, the agent exploits project $H$ forever. Any ensuing exploration is then optimal.

In the classical environment, when exploration and exploitation are entangled, each project is associated with a (Gittins) index that depends only on the parameters of that project. Specifically, the index for a project $z$ is given by $p_{z}R_{z}\frac{(r+\lambda_{z})}{(r+p_{z}\lambda_{z})}$. The agent explores and exploits the project with the higher index. When news is balanced, the agent switches away from exploring and exploiting project $z$ only upon receiving news.

In our environment, with exploration disentangled from exploitation, the expected reward $p_{x}R_{x}$ of each project $x$ serves as a separable index for exploitation: the agent optimally exploits whichever project generates the highest expected reward. The agent may, however, switch her exploited project twice when exploration starts from an unfavorable project $L$: first, if she learns her initially unfavorable project $L$ is good and, second, if she later learns her initially exploited project $H$ is, in fact, good (as $R_{H}>R_{L}$). This already highlights the importance of disentanglement, as exploration and exploitation need not track one another. Furthermore, as Proposition \ref{prop2} suggests, there is no obvious separable index that underlies optimal exploration, a point we return to in the next subsection. Intuitively, the value of exploring the unfavorable project depends on the returns of the favorable project.

Comparative statics are clearly affected by the ability to disentangle exploration from exploitation. Under the canonical assumption that the two are entangled, a higher prior probability that one project is good makes it more appealing for exploration and exploitation. In contrast, as Proposition \ref{prop2} indicates, in our setting, a higher prior that a project is good may make its exploration \textit{less} appealing. Additionally, optimal exploration depends only on the ``informational value'' derived from each project. Consequently, unlike in the classical environment, the optimal policy does not depend on the discount factor in ours.

\subsection{No Exploration Index}

As mentioned above, in the classical environment, Gittins (1979)'s characterization of the optimal policy holds. That is, each project is associated with an index that only depends on the parameters of that project. At any point, the agent explores and exploits the project with the highest current index. While Proposition \ref{prop2} offers a simple characterization of the optimal policy, we now show that, in our setting, optimal exploration is not governed by an index \`{a} la Gittins (1979).

Suppose that the optimal policy in a balanced news setting can be described via an index tailored to each project. We denote by $I(p,R,\lambda)$ the index corresponding to a project with a probability $p$ of being good, an arbitrary reward $R>0$ conditional on being good, and a rate of news arrival---good or bad---of $\lambda$.

Consider three hypothetical projects. Project $i=1,2,3$ is governed by a probability $p_{i}$ that it is good, associated with a flow reward of $R_{i}>0$, and a news arrival rate of $\lambda_{i}>0$. Suppose that 
\begin{equation*} 
p_{2}R_{2}>p_{1}R_{1}\text{ \ \ and \ \ }\lambda_{2}(1-p_{2})<\lambda_{1}\left( 1-\frac{p_{2}R_{2}}{R_{1}}\right). 
\end{equation*} 
Then, using Proposition \ref{prop2}, when the agent has access to projects $1$ and $2$ , she optimally exploits project $2$, but explores project $1$. That is, $ I(p_{1},R_{1},\lambda_{1})>I(p_{2},R_{2},\lambda_{2})$.

Suppose now that
\begin{equation*} 
p_{2}R_{2}>R_{3}>p_{3}R_{3}>p_{1}R_{1}. 
\end{equation*} 
This implies that, when the agent has access to projects $2$ and $3$, she optimally explores and exploits project $2$. That is, $I(p_{2},R_{2},\lambda_{2})>I(p_{3},R_{3},\lambda_{3})$.

Suppose, further, that $\lambda_{3}$ is high enough so that
\begin{equation*} 
\lambda_{3}(1-p_{3})>\lambda_{1}\left( 1-\frac{p_{3}R_{3}}{R_{1}}\right). 
\end{equation*} 
This implies that, when the agent has access to projects $1$ and $3$, she explores and exploits project $3$. Therefore, $I(p_{3},R_{3},\lambda_{3})>I(p_{1},R_{1},\lambda_{1})$, establishing a cycle, in contradiction. Although this construction is done for the balanced news setting, it is robust to small perturbations of parameters. In particular, the optimal exploration policy is not generally governed by an index for either good or bad news settings either. Thus,

\bigskip

\begin{corollary}[No Exploration Index]\label{corr2}
The optimal exploration policy is not governed by an index.   
\end{corollary}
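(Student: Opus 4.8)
The plan is to refute the existence of a project-specific exploration index by producing a violation of transitivity. If optimal exploration were governed by an index $I(p,R,\lambda)$, then in any two-project economy the agent would explore the project with the larger index, so ``has a weakly larger index than'' would be a transitive relation on projects. It therefore suffices to display three projects $1,2,3$ — project $i$ being good with probability $p_i$, paying $R_i>0$ when good, with balanced arrival rate $\lambda_i>0$ — for which the pairwise criterion of Proposition \ref{prop2} prescribes exploring $1$ over $2$, $2$ over $3$, and $3$ over $1$. Such a $3$-cycle is incompatible with any index, so the index hypothesis fails in balanced news settings; a perturbation argument then extends the conclusion to good and bad news settings.

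The construction uses three groups of strict inequalities, each engineered so that Proposition \ref{prop2} delivers the desired comparison. First, require $p_1R_1<p_2R_2<R_1$ together with $\lambda_1\bigl(1-p_2R_2/R_1\bigr)>\lambda_2(1-p_2)$: in the pair $\{1,2\}$ project $2$ is favorable ($\tilde p_2=p_2$) and project $1$ unfavorable with $\tilde p_1=p_2R_2/R_1$, so the agent explores project $1$. Second, insert $p_3,R_3$ with $p_2R_2>R_3$ and $p_3R_3>p_1R_1$: the inequality $p_2R_2>R_3$ forces $\tilde p_3=\min(p_2R_2/R_3,1)=1$ in the pair $\{2,3\}$, so exploring the unfavorable project $3$ carries no informational value and the agent explores project $2$; meanwhile $p_3R_3>p_1R_1$ makes project $3$ favorable in the pair $\{1,3\}$. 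Third, take $\lambda_3$ large enough that $\lambda_3(1-p_3)>\lambda_1\bigl(1-p_3R_3/R_1\bigr)$, so in $\{1,3\}$ the agent explores project $3$. A concrete witness is $R_1=10$, $R_2=10.1$, $R_3=6$, $p_1=0.2$, $p_2=0.8$, $p_3=0.5$, $\lambda_1=1$, $\lambda_2=0.5$, $\lambda_3=2$, with the rewards chosen pairwise distinct so that each two-project economy respects the $R_H>R_L$ convention. The main, though modest, obstacle is exactly this bookkeeping: one must verify that the three groups of inequalities are simultaneously satisfiable and that, in each of the three economies, the identities of the favorable and unfavorable projects and the values of $\tilde p_x$, $\tilde p_y$ are as asserted.

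With such projects in hand, Proposition \ref{prop2} yields $I(p_1,R_1,\lambda_1)>I(p_2,R_2,\lambda_2)>I(p_3,R_3,\lambda_3)>I(p_1,R_1,\lambda_1)$, which is impossible; hence no index governs optimal exploration in balanced news settings. To obtain the conclusion for good and bad news settings, observe that every inequality above is strict and that the agent's payoffs — and therefore her optimal exploration choice in each pair — depend continuously on the arrival rates. Replacing each balanced project $i$ by a good news project with $\lambda_i^g=\lambda_i$, $\lambda_i^b=\lambda_i-\epsilon$ (or a bad news project with the roles reversed) for $\epsilon>0$ small preserves all three pairwise rankings, so the same $3$-cycle obstructs any index $I(p,R,\lambda^g,\lambda^b)$ in those environments as well. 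This establishes Corollary \ref{corr2}.
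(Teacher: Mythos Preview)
Your proposal is correct and follows essentially the same approach as the paper: both construct a three-project intransitivity cycle using Proposition~\ref{prop2} in the balanced news setting (with the same three groups of inequalities governing the pairs $\{1,2\}$, $\{2,3\}$, $\{1,3\}$), then invoke robustness to small perturbations to extend the conclusion to good and bad news settings. Your version adds a concrete numerical witness and explicitly verifies the $R_H>R_L$ convention in each pair, but the argument is otherwise identical.
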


\bigskip

We stress that this conclusion is not driven by an excess number of degrees of freedom. The classical environment entails the same project characteristics and, therefore, the same degrees of freedom.

\subsection{Good News Settings}

We now analyze good news settings. Before describing our general characterization, consider the following example, highlighting the impacts of disentanglement when both projects are risky.

\begin{description} \item[Example 1 (Good News: Ex-ante Identical Projects)] Suppose the two projects are ex-ante identical: $p_{L}=p_{H}$ and $R_{L}=R_{H}$.\footnote{Strictly speaking, this violates our assumption that $R_{H}>R_{L}$, which generates the non-trivial scenarios for Section \ref{SafeProject}. We assume equal rewards here to simplify our illustration of the stark effects of disentanglement.} Furthermore, for simplicity, consider the pure good news setting in which $ \lambda_{z}^{b}=0$ and $\lambda_{z}^{g}=\lambda>0$ for $z=L,H$.

In the classical environment with $\alpha =1$, the optimal strategy requires splitting exploration and exploitation equally between the two projects until receiving news. Intuitively, consider a discrete time approximation of this problem. If the agent explores and exploits project $x$, the corresponding Gittins index declines absent news---the agent becomes more pessimistic about project $x$. She should then immediately switch to project $y$. In the limit, splitting exploration and exploitation equally across the two projects leads the two indices to decline at the same rate and maintains the incentive to continue with such a split. We can interpret this strategy as requiring the agent to switch between projects infinitely often.\footnote{See case (v) in Section 3.3.2 of \cite{gittins2011bandits} for details.}

In contrast, in our setting with $\alpha =0$, an optimal policy requires indefinite disentanglement, i.e, exploiting one project and exploring the other indefinitely, until the arrival of good news. If the agent exploits project $x$ and explores project $y$ at the outset for any infinitesimal time interval, project $x$ becomes favorable, so continuing to exploit project $x$ is optimal. Furthermore, information is useful to the agent only if it leads her to change her exploited project. Good news on project $x$ would not alter her exploitation choices; only good news on the unfavorable project would. This means that it is optimal for the agent to use her entire exploration budget on project $y$: any splitting of exploration resources between the two projects is sub-optimal since it reduces the effective rate at which good news arrives on the unfavorable project. As a consequence, with full disentanglement, the agent switches her exploitation choices at most once and \textit{never} switches her exploration choice prior to receiving news. Of course, the agent is indifferent as to which project she explores and which she exploits at the outset given the complete symmetry of the problem. In fact, the agent can also choose at random which project to start exploring. The contrast with the classical environment is that such randomization cannot proceed with a split of exploration or exploitation for a non-trivial duration. 
\end{description}

In general, in the classic environment, when projects are heterogeneous, the agent initially explores and exploits the project with the higher Gittins index. Absent news, that project's Gittins index declines over time, until it reaches equality with the index of the other project. Upon such indifference, the agent splits exploration and exploitation to maintain her indifference. We can interpret this splitting of attention, or exploration resources, as the limit of sequential immediate switches in discrete time \citep[see][]{gittins2011bandits}. As we now show, such rapid switches \textit{never} occur when exploration and exploitation are disentangled.

Consider then a disentangled setting with good (or balanced) news, where $\lambda_{x}^{g}\geq \lambda_{x}^{b}$ for $x=L,H$. Whenever project $x$ is favorable, so that $p_{x}R_{x}\geq p_{y}R_{y}$, exploiting project $x$ is optimal. When the agent explores project $x$, receiving no news makes her increasingly pessimistic. We denote by $\bar{t}_{x}(p_{L},p_{H})$ the time it takes the agent to reach indifference between the expected values of both projects. If $p_{x}R_{x}=p_{y}R_{y}$, then $\bar{t}_{x}(p_{L},p_{H})=0$; otherwise, $\bar{t}_{x}(p_{L},p_{H})>0$.\footnote{If $p_{x}R_{x}>p_{y}R_{y}$ in a balanced news setting, exploring project $x$ does not change the agent's posterior and we set $\bar{t}_{x}(p_{L},p_{H})=\infty$. When $p_{x}R_{x} \leq p_{y}R_{y}$, we denote $\bar{t}_{x}(p_{L},p_{H})=0$ even when $\lambda_{x}^{g}=\lambda_{x}^{b}$ and the agent does not alter her prior as time passes without information.} 
Specifically, after exploring project $x$ for a duration $\bar{t} _{x}(p_{L},p_{H})$ without receiving news, the agent's posterior that project $x$ is good is precisely $p_yR_y/R_x$. That is,
\begin{equation*} 
\frac{p_{x}e^{-\lambda_{x}^{g}\bar{t}_{x}(p_{L},p_{H})}}{p_{x}e^{-\lambda_{x}^{g}\bar{t}_{x}(p_{L},p_{H})}+(1-p_{x})e^{-\lambda_{x}^{b}\bar{t} _{x}(p_{L},p_{H})}}=p_yR_y/R_x. 
\end{equation*} 
Simplifying, whenever $\lambda_{z}^{g}>\lambda_{z}^{b}$, we obtain: 
\begin{equation*}
\bar{t}_{x}(p_{L},p_{H})=\frac{1}{\lambda_{x}^{g}-\lambda_{x}^{b}}\ln \left( \frac{p_{x}(R_{x}-p_{y}R_{y})}{p_{y}R_{y}(1-p_{x})}\right).
\end{equation*} 

\bigskip

We now state our result characterizing optimal exploration in this setting.\footnote{As stated at the outset, we ignore 0-measure sets. When we say the agent explores project $y$ at some time, we mean the agent explores project $y$ for a positive-measure set of times. Switching to a project $x$ implies that there is an ensuing positive-measure set of times at which the agent explores project $x$.}

\bigskip

\begin{proposition}[Optimal Exploration in Good News Settings]\label{prop3}
Suppose $\lambda_{z}^{g}>\lambda_{z}^{b}$ for $z=L,H$ and that project $x$ is favorable, so that $p_{x}R_{x}\geq p_{y}R_{y}$. An optimal exploration strategy is described as follows.
\begin{itemize} 
\item If, at any time the agent explores project $y$, she never switches absent news. 
\item If the agent initially explores project $x$, then if, absent news, she switches to exploring project $y$, she does so at a time $T\leq \bar{t}_{x}(p_{L},p_{H})$. 
\end{itemize}
Furthermore, if $\lambda_{x}^{b}=0$, there is an optimal strategy in which the agent never switches her explored project absent news.
\end{proposition}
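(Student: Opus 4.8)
The plan is to set up the agent's value function for the disentangled good-news problem and characterize optimal exploration via a one-shot deviation / comparison argument, exploiting that in good news settings the only short-run actionable information is favorable news about the currently \emph{unfavorable} project. First I would record the basic state variable: once the favorable project is fixed, the relevant state is the pair of posteriors $(p_L,p_H)$, which evolves deterministically (drifting down for the explored project) until conclusive news hits. I would then observe the reduction already noted in the text: after good news on $H$ there is no further exploration value, and after good news on $L$ we are in the one-safe-project world of Proposition~\ref{prop1}; so it suffices to analyze the pre-news phase, and the only decision is which project to point the unit attention budget at, as a function of the current posteriors.

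The core claim has two bullets. For the first bullet --- if the agent ever explores the unfavorable project $y$, she never switches absent news --- the key step is monotonicity of the state in the ``right'' direction: while exploring $y$, $p_y$ falls and $p_x$ is unchanged, so $x$ stays favorable and in fact becomes \emph{more} favorable; meanwhile the only payoff-relevant event is good news on $y$, whose arrival would make the agent (weakly) switch exploitation to $y$. I would show that the marginal value of continuing to explore $y$ — essentially the probability-weighted discounted gain $p_y\lambda_y^g(R_y - \text{(value of exploiting }x))$ — is nonincreasing along this no-news path, because $p_y$ is falling and the outside option (exploiting $x$ and exploring optimally thereafter) is nondecreasing. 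Hence if exploring $y$ is optimal at some instant, it remains optimal at every later no-news instant; formally this is a single-crossing / supermodularity argument on the Bellman operator, and I would phrase it as: the set of posteriors at which ``explore $y$'' is optimal is forward-invariant under the $y$-exploration flow. For the second bullet --- if the agent starts exploring the favorable project $x$ and ever switches to $y$ absent news, the switch time $T$ satisfies $T\le \bar t_x(p_L,p_H)$ --- the point is that for $t>\bar t_x$ the posterior on $x$ has dropped below $p_yR_y/R_x$, i.e. $y$ has become favorable; by the symmetric version of the first-bullet logic (now with roles reversed, $y$ favorable), once exploration has landed on the favorable project it is never optimal to leave it absent news, so a switch can only occur while $x$ is still favorable, i.e. at $T\le \bar t_x$. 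I would make this rigorous by combining the first bullet (applied with the favorable/unfavorable labels as they stand at time $t$) with the observation that the favorable label flips exactly at $\bar t_x$.

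For the final sentence --- if $\lambda_x^b=0$ there is an optimal strategy with no switch of the explored project absent news --- the plan is to show that in the pure-good-news-on-$x$ case, exploring $x$ is dominated from the start whenever $x$ is strictly favorable, so the agent can be taken to explore $y$ throughout (and then the first bullet gives no switch). The reason: if $\lambda_x^b=0$, then exploring $x$ and receiving no news makes the agent strictly more pessimistic about $x$ (pure good news), while good news on $x$ is not actionable (she already exploits $x$); so exploring $x$ yields no actionable information and only worsens the state, whereas exploring $y$ yields the actionable possibility of good news on $y$. A deviation argument exactly analogous to equation~\eqref{GNDeviation} — compare ``explore $y$ for $\Delta$ then revert'' against ``explore $x$ for $\Delta$ then revert,'' noting the $x$-exploration branch contributes no first-order informational term because its news is not actionable — shows the $y$-exploration strategy weakly dominates pointwise, hence there is an optimal strategy that explores $y$ until good news. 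When $x$ and $y$ have equal expected value at the outset one invokes the balanced-news tie-breaking already handled, or notes the agent may relabel.

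The main obstacle I anticipate is making the ``marginal value of exploring $y$ is monotone along the no-news path'' step fully rigorous, since the continuation value after exploiting $x$ is itself the value of a nontrivial optimal-exploration subproblem (not a fixed constant), so one must argue its monotonicity in the posteriors jointly with the monotonicity of $p_y$. I would handle this by working with the HJB variational inequality, establishing by a verification/comparison argument that the value function is monotone in each posterior (increasing in $p_H$ when $H$ favorable, etc.) and convex along news-free trajectories, and then reading off the single-crossing property of the ``explore $y$ minus explore $x$'' gain; the deterministic, one-dimensional-along-paths structure of the pre-news dynamics is what makes this tractable.
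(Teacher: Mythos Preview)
Your argument for the first bullet has the single-crossing direction backward. You claim that along the no-news $y$-exploration path the ``marginal value of continuing to explore $y$'' is \emph{nonincreasing} (because $p_y$ falls, so actionable good news on $y$ becomes less likely), and then conclude that ``if exploring $y$ is optimal at some instant, it remains optimal at every later no-news instant.'' But nonincreasing relative value gives the opposite implication: if exploring $y$ is optimal now, it may well \emph{cease} to be optimal later. To obtain forward invariance of the ``explore $y$'' region under the $y$-flow you would need the difference (value of exploring $y$ minus value of exploring $x$) to be non\emph{decreasing} along that path, and nothing in your sketch establishes this. The paper avoids the monotonicity question entirely: it passes to an auxiliary problem $\Gamma_A$ in which exploring $y$ delivers \emph{balanced} news at rate $\lambda_y^g$. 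In $\Gamma_A$ the posterior is stationary while exploring $y$, so ``optimal now $\Rightarrow$ optimal forever'' is immediate from dynamic programming; since bad news on the unfavorable $y$ is never actionable, the candidate strategies yield identical payoffs in $\Gamma$ and $\Gamma_A$, and the conclusion transfers back. Your HJB/convexity plan could in principle be made to work, but you have not identified the correct sign, and the obstacle you flag at the end---that the continuation value is itself the value of a nontrivial subproblem---is exactly where the argument as written breaks.

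Your argument for the $\lambda_x^b=0$ clause proves too much and is therefore wrong. You assert that when $\lambda_x^b=0$ ``exploring $x$ yields no actionable information,'' so exploring $y$ weakly dominates from the start. But exploring $x$ \emph{does} yield actionable information: the no-news drift drives $p_x$ down and, at time $\bar t_x$, triggers a switch of the \emph{exploited} project to $y$---precisely the content of the second bullet. Indeed, Claim~\ref{claim2} in the paper shows that in the pure good-news case with $p_LR_L<p_HR_H<R_L$, exploring the favorable project $H$ forever is sometimes strictly optimal; your argument would exclude this. The proposition only asserts that \emph{some} no-switch strategy (either $T=0$ or $T=\infty$) is optimal, not that $T=0$ is. The paper proves this by a coupling argument: it couples the news processes of an agent using $\sigma_0$ and an agent using $\sigma_T$ with $0<T\le \bar t_x$, and shows pathwise that $\sigma_0$ weakly dominates every interior $T$; combined with the first two bullets this leaves only $T\in\{0,\infty\}$ as optimal candidates.

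Your second-bullet argument has the right idea---apply the first bullet with the favorable/unfavorable labels swapped once $t>\bar t_x$---though you misstate it (at $t>\bar t_x$ the agent is exploring the now-\emph{unfavorable} $x$, not the favorable project). Modulo that slip, this part matches the paper.
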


\bigskip

Proposition \ref{prop3} illustrates that disentanglement dramatically reduces the expected number of switches prescribed by the optimal policy. The exploited project can be switched at most twice: starting from project $H$, good news about project $L$ could lead to one switch if the agent is sufficiently pessimistic about project $H$, and later good news about project $H$ could lead to a switch back to project $H$. The explored project can be switched at most once before any news arrives. In fact, if the agent explores the unfavorable project initially, she never switches the explored project absent news, no matter how pessimistic she becomes about this project. Of course, when she becomes pessimistic about the explored project, she also becomes increasingly confident that she is exploiting the superior project. 

The role of disentanglement is evident in the optimal policy described in Proposition \ref{prop3}. Under this policy, eventually, absent news, the exploited project must differ from the explored project. Indeed, since the optimal policy prescribes indefinite exploration of a project, eventually the posterior probability that the explored project is good must be low enough to make the other project favorable and, therefore, exploited.

\begin{figure}[t] 
\includegraphics[width=1\columnwidth]{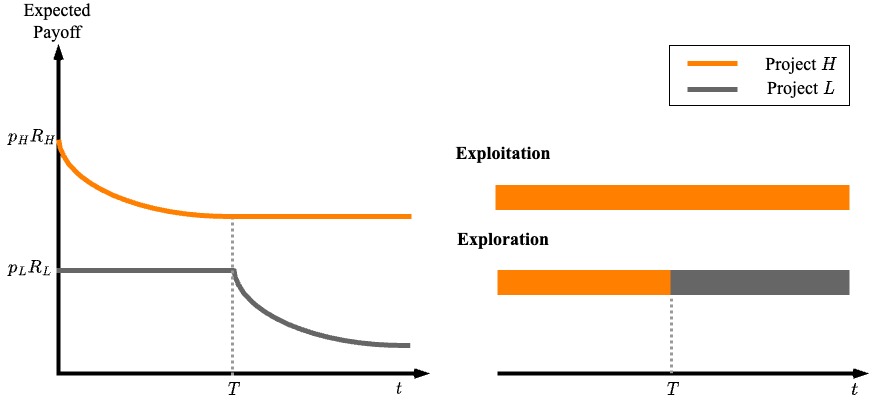} \vspace{1mm} 
\caption{Optimal policy with two risky projects in good news settings\label{FigureProp3}} 
\end{figure}

Figure \ref{FigureProp3} depicts the exploration and exploitation patterns in good news settings. In the figure, project $H$ is initially favored and both explored and exploited. As time progresses without news, the agent's confidence in project $H$ diminishes. However, at time $T$, the agent switches to exploring project $L$ even though project $H$ remains the more favorable option. In the absence of news, the agent continues to exploit project $H$ indefinitely. Starting from time $T$, the projects she explores and exploits diverge.

In order to understand the logic of Proposition \ref{prop3}, consider first the case in which the optimal exploration strategy prescribes exploring the unfavorable project $y$ initially, so that the agent explores and exploits different projects. The value of exploring project $y$ depends only on the rate at which good news arrives: receiving bad news on project $y$ retains project $x$ as favorable. Thus, similar to the setting with one safe project, the value of exploring project $y$ depends on $\lambda_{y}^{g}$, but not on $\lambda_{y}^{b}$. In particular, this value is the same if we increase the arrival rate of bad news so that news is balanced, $\lambda_{y}^{b}=\lambda_{y}^{g}$. In this case, as discussed in Section \ref{BalancedNews}, the agent's posteriors do not change absent news. If it is optimal to explore project $y$ at some point, it is also optimal to explore project $y$ after any amount of time that has passed without news. We can conclude that it must also be optimal to continue exploring project $y$ in the absence of news when $\lambda_{y}^{b}<\lambda_{y}^{g}$.

Consider now the case in which the optimal strategy prescribes exploring the favorable project $x$ initially, so that the agent explores and exploits the same project at first. Why can it be optimal for the agent to switch the project she explores when $\lambda_{x}^{b}>0$? Suppose, for instance, that $p_{H}R_{H}>R_{L}\geq p_{L}R_{L}$. In this scenario, exploring project $L$ initially is not useful: even good news on project $L$ would not lead the agent to switch her exploited project. Instead, if $\lambda_{H}^{b}>0$ and the agent explores project $H$, she would switch to exploiting project $L$ upon receiving bad news on project $H$: exploring project $H$ is valuable. When, instead, $p_{L}R_{L}<p_{H}R_{H}<R_{L}$, good news on project $L$ would lead the agent to switch to exploiting project $L$, implying that exploring project $L$ can be useful. The determination of the switching time $T$ depends on the relative magnitudes of $p_{H}R_{H}$, $p_{L}R_{L}$, and the arrival rates of news on the two projects.

Why can the agent not switch exploration of the favorable project $x$ after a duration $T>\bar{t}_{x}(p_{L},p_{H})$ without receiving news? By the definition of $\bar{t}_{x}(p_{L},p_{H})$, after such a duration $T$ without news, project $x$ becomes unfavorable. Our previous arguments then imply that, absent news, indefinite exploration of project $x$ beyond time $T$ is optimal.\bigskip

As is common in individual decision problems, multiplicity of optimal policies is rare, as the following claim illustrates. 

\bigskip

\begin{claim}[Uniqueness of Optimal Policy]\label{claim1}
Suppose that $\lambda_{z}^{g}>\lambda_{z}^{b}$ for $z=L,H$ and that the favorable project $x$ is such that $R_y>p_x R_x > p_y R_y$. Then, generically, the optimal policy is unique.\footnote{The optimal policy is unique as long as $\lambda_x^b R_y \neq \lambda_y^g R_x$, which excludes a zero-measure set of parameters.}   
\end{claim}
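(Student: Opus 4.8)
\textbf{Proof plan for Claim \ref{claim1}.}

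The plan is to leverage Proposition \ref{prop3}, which already pins down the \emph{structure} of every optimal policy when $\lambda_z^g > \lambda_z^b$: either the agent explores the unfavorable project $y$ from the start and never switches absent news, or she explores the favorable project $x$ and, absent news, switches to $y$ at some deterministic time $T \le \bar t_x(p_L,p_H)$ (with $T = \infty$ meaning no switch). Since we are in the regime $R_y > p_x R_x > p_y R_y$, good news on $L$ is always actionable (even if $L$ is the favorable project, since $R_H > R_L$ the agent would switch to exploiting $H$), so the value-of-exploration comparison is genuine and not degenerate. The goal is to show that, generically in the parameters, exactly one of these candidate policies strictly dominates. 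I would parametrize the family of candidate policies by the switching time $T \in [0,\infty]$ (with $T=\infty$ being ``explore $x$ forever,'' and the ``explore $y$ from the start'' policy being a separate endpoint, or recovered as a limiting object), write the agent's expected payoff $V(T)$ as an explicit function, and show it is strictly quasi-concave — or at least that its maximizer is generically unique.

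First I would write down $V(T)$ in closed form. While exploring project $x$ over $[0,T]$ with no news, the posterior on $x$ evolves deterministically and good/bad news on $x$ arrive at the effective rates determined by the current posterior; the continuation values upon each kind of news are known (good news on $H$: exploit $H$ forever; good news on $L$ with $H$ still favorable: by Proposition \ref{prop1}/Section \ref{SafeProject} the one-safe-project policy; bad news on $x$: switch exploited project and the problem reduces to one safe project with the other arm). At time $T$, absent news, the agent switches to exploring $y$ forever, which again has a known value (a one-safe-project-type continuation keyed to $p_y$, or rather the current favorable/unfavorable configuration). So $V(T)$ is an integral of exponentials in $T$ — smooth on $(0,\infty)$ — and I would compute $V'(T)$. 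The sign of $V'(T)$ is governed by a comparison, at the time-$T$ posterior, between the marginal value of one more instant of exploring $x$ versus switching to $y$; by the balanced-news logic invoked in the discussion after Proposition \ref{prop3}, the value of exploring $y$ is insensitive to $\lambda_y^b$, whereas continuing to explore $x$ has value that decays as the posterior on $x$ falls. This should give that $V'(T)$ changes sign at most once, from $+$ to $-$, so $V$ is quasi-concave and the optimal $T$ is either $0$, $\infty$, or the unique interior stationary point.

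The generic-uniqueness step is then a transversality argument: the only way optimality fails to be unique is if $V$ is flat on an interval (non-generic, since $V$ is a non-degenerate combination of distinct exponentials unless rates coincide) or if two distinct local maximizers tie — e.g. the ``explore $y$ forever'' endpoint ties with the interior optimum, or $T=0$ ties with $T=\infty$. Each such tie is a single analytic equation in the parameters $(p_L,p_H,R_L,R_H,\lambda_L^g,\lambda_L^b,\lambda_H^g,\lambda_H^b,r)$ that is not identically satisfied, hence cuts out a measure-zero set; the footnote's condition $\lambda_x^b R_y \neq \lambda_y^g R_x$ is presumably exactly the equation ruling out the knife-edge where the ``switch at $T$'' and ``don't switch'' values coincide at the relevant posterior, so I would verify that this single inequality suffices by checking that away from it $V'$ has a strict sign at the boundary cases. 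The main obstacle I anticipate is bookkeeping the continuation values cleanly — there are several sub-cases depending on whether $x = H$ or $x = L$ and on where the posterior sits relative to $p_M$-type cutoffs after bad news — and making sure the ``explore $y$ from the start'' policy is correctly compared against the $T$-family rather than double-counted. Once $V(T)$ is assembled, the quasi-concavity and the single-equation genericity argument are routine.
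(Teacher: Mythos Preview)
Your overall plan---parametrize by the switching time $T$, compute $V(T)$, and argue the maximizer is unique---is sound, but it is more laborious than the paper's route and you misidentify the mechanism behind both the quasi-concavity and the genericity condition. The paper does not compute $V(T)$ globally; instead, working in the auxiliary problem where news on the unfavorable project $y$ is balanced at rate $\lambda_y^g$, it computes directly the marginal gain from exploring $x$ for one more instant before switching to $y$. This marginal gain is (up to a positive factor) $D(p_x)=\lambda_x^b(1-p_x)-\lambda_y^g(1-p_xR_x/R_y)$, which is \emph{linear} in the posterior $p_x$. Since $p_x$ moves monotonically as the agent explores $x$ without news, $D$ can vanish at most once along the path, pinning down a unique candidate switching point. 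That is the whole argument.

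Two concrete corrections. First, your claimed reason for the single sign change---``continuing to explore $x$ has value that decays as the posterior on $x$ falls''---is not right: the slope of $D$ in $p_x$ is $-\lambda_x^b+\lambda_y^g R_x/R_y$, which can have either sign, so the marginal value of exploring $x$ can \emph{rise} as $p_x$ falls. What delivers uniqueness is monotonicity (linearity), not decay. Second, the footnote condition $\lambda_x^b R_y\neq\lambda_y^g R_x$ is not a transversality condition ruling out payoff ties between endpoints, as you conjecture; it is exactly the condition that the slope of $D$ is nonzero, so that $D$ is strictly monotone and has at most one zero. Your tie-breaking/analytic-variety argument would eventually get there, but it obscures this simple algebraic fact and requires you to assemble and compare several continuation values that the paper's local computation sidesteps entirely.
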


\bigskip

The claim captures settings in which good news on the unfavorable project is immediately valuable: it alters the exploited project. Multiplicity can arise when information has no instantaneous value, regardless of which project is explored. This is the case when $p_{H}R_{H}>R_{L}$ in a pure good news setting, with $\lambda^b_z=0$ for $z=L,H$. In this parameter region, the agent exploits project $H$ initially even after good news on project $L$. Therefore, exploration has no instantaneous value. Of course, good news on project $L$ leads the agent to explore project $H$, and a sufficiently long time without news would lead the agent to switch to exploiting project $L$. In this case, there are multiple optimal strategies that differ in which project is explored initially.

\bigskip

We now turn to a discussion of the initial exploration choice. For expositional simplicity, we focus on the special case of pure good news settings, where $\lambda_{x}^{b}=0$, $x=L,H$. In this case, Proposition \ref{prop3} indicates that an optimal policy has the agent explore the same project until receiving news, implying that the initial choice is permanent absent news. We also restrict attention to the case $p_{L}R_{L}<p_{H}R_{H}<R_{L}$, where information on both projects is valuable at the outset. Indeed, exploring project $H$ for a sufficiently long time would make the agent pessimistic about the quality of that project and, absent news, the agent would switch her exploited project after a duration $\bar{t}_{H}(p_{L},p_{H})$. Exploring project $L$ is also valuable: receiving good news on that project would lead the agent to immediately switch the project she exploits. In particular, for this set of parameters, exploring either project can be optimal depending on the difference between news' arrival rates. 

In line with our previous notation, we denote $\tilde{p}_{L}\equiv p_{H}R_{H}/R_{L}$. Thus, $\tilde{p}_{L}$ corresponds to the prior that project $L$ is good at which the agent is indifferent between the two projects.

\bigskip

\begin{claim}[Initial Choice with Pure Good News]\label{claim2}
Suppose $\lambda_{z}^{b}=0$ for $z=L,H$ and that $p_{L}R_{L}<p_{H}R_{H}<R_{L}$. It is optimal to explore project $H$ if and only if $\lambda_{H}^{g}\frac{w-\rho _{L}}{1-\rho _{L}} (1-p_{H})\geq \lambda_{L}^{g}(1-\tilde{p}_{L})$, where $w=e^{-r\bar{t}_{H}(p_{L},p_{H})}$ and $\rho _{L}=\lambda_{L}^{g}/\left( r+\lambda_{L}^{g}\right)$.   
\end{claim}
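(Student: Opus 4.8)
The plan is to use Proposition~\ref{prop3} to reduce the claim to a comparison of two explicit numbers. Since $\lambda_{z}^{b}=0$ for $z=L,H$, the last clause of Proposition~\ref{prop3} (applied with the favorable project $x=H$, which is favorable because $p_{H}R_{H}>p_{L}R_{L}$) guarantees an optimal policy that never switches the explored project absent news; such a policy either explores $H$ throughout or explores $L$ throughout. Writing $V_{j}$ for the expected discounted payoff of exploring project $j$ until news while always exploiting the myopically optimal project, the optimal value is $\max\{V_{H},V_{L}\}$, so ``it is optimal to explore $H$'' is equivalent to $V_{H}\ge V_{L}$. (The hypothesis $R_{L}>p_{H}R_{H}>p_{L}R_{L}$ is precisely that of Claim~\ref{claim1}, so this optimum is generically unique and the inequality is generically strict.) I would also record how the exploited project moves before news: project $H$ is exploited at the outset; under $V_{L}$ it remains favorable (exploring $L$ only lowers the posterior on $L$), while under $V_{H}$ project $L$ overtakes $H$ at exactly time $\bar{t}_{H}$, at which point the agent switches the \emph{exploited} project to $L$ but keeps exploring $H$.

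Next I would compute $V_{L}$ by conditioning on the quality of $L$. Let $V_{0}=p_{H}R_{H}/r$. If $L$ is bad (probability $1-p_{L}$), no news ever arrives and $H$ is exploited forever, contributing $V_{0}$. If $L$ is good (probability $p_{L}$), good news arrives at an $\mathrm{Exp}(\lambda_{L}^{g})$ time $\sigma$; up to $\sigma$ the agent exploits $H$ (expected flow $p_{H}R_{H}$), and from $\sigma$ on she faces the one-safe-project problem of Section~\ref{SafeProject} with posterior still $p_{H}$ on $H$ (no information on $H$ was gathered). Since $p_{H}<R_{L}/R_{H}=\bar{p}(0)$, that continuation value is $W(p_{H})=\tfrac{R_{L}}{r}+p_{H}\rho_{H}\tfrac{R_{H}-R_{L}}{r}$, with $\rho_{H}=\lambda_{H}^{g}/(r+\lambda_{H}^{g})$; integrating over $\sigma$ gives $V_{L}=V_{0}+p_{L}\rho_{L}\bigl(W(p_{H})-V_{0}\bigr)$. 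For $V_{H}$ I would condition on the quality of $H$. If $H$ is bad, no news arrives, so the agent exploits $H$ (flow $0$) on $[0,\bar{t}_{H})$ and $L$ thereafter, contributing $\tfrac{w}{r}p_{L}R_{L}$ with $w=e^{-r\bar{t}_{H}}$. If $H$ is good, good news arrives at an $\mathrm{Exp}(\lambda_{H}^{g})$ time $\tau$: for $\tau\le\bar{t}_{H}$ the agent exploits $H$ throughout (value $R_{H}/r$), while for $\tau>\bar{t}_{H}$ she exploits $H$ on $[0,\bar{t}_{H})$, $L$ on $[\bar{t}_{H},\tau)$ (expected flow $p_{L}R_{L}$), and $H$ afterward. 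Evaluating these exponential integrals produces a closed form containing $e^{-(r+\lambda_{H}^{g})\bar{t}_{H}}$.

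The crucial step is then to substitute the defining equation of $\bar{t}_{H}$. For $\lambda_{H}^{b}=0$ it reads $p_{H}e^{-\lambda_{H}^{g}\bar{t}_{H}}\big/\bigl(p_{H}e^{-\lambda_{H}^{g}\bar{t}_{H}}+1-p_{H}\bigr)=p_{L}R_{L}/R_{H}$, hence $e^{-\lambda_{H}^{g}\bar{t}_{H}}=p_{L}R_{L}(1-p_{H})\big/\bigl(p_{H}(R_{H}-p_{L}R_{L})\bigr)$; writing $e^{-(r+\lambda_{H}^{g})\bar{t}_{H}}=w\,e^{-\lambda_{H}^{g}\bar{t}_{H}}$ and plugging in makes the expression for $V_{H}$ collapse to $V_{H}=V_{0}+\tfrac{1}{r}\rho_{H}w(1-p_{H})p_{L}R_{L}$. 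Finally I would subtract: dividing by the positive factor $p_{L}/r$, $V_{H}\ge V_{L}$ becomes $\rho_{H}w(1-p_{H})R_{L}\ge\rho_{L}\bigl[R_{L}(1-p_{H}\rho_{H})-p_{H}R_{H}(1-\rho_{H})\bigr]$, and substituting $\lambda_{z}^{g}=r\rho_{z}/(1-\rho_{z})$ together with $1-\tilde{p}_{L}=(R_{L}-p_{H}R_{H})/R_{L}$ turns this, after rearrangement, into $\lambda_{H}^{g}\tfrac{w-\rho_{L}}{1-\rho_{L}}(1-p_{H})\ge\lambda_{L}^{g}(1-\tilde{p}_{L})$, which is the claimed condition.

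The main obstacle is Step three: keeping the exploited project straight across the three phases $[0,\bar{t}_{H})$, $[\bar{t}_{H},\tau)$, $[\tau,\infty)$ conditional on $H$'s realized quality, carrying out the exponential integrals, and — the real content — recognizing that feeding in the defining equation for $\bar{t}_{H}$ is exactly what collapses the unwieldy formula for $V_{H}$ to a single term, after which Step four is a routine rearrangement. The reduction in Step one is essentially immediate from Proposition~\ref{prop3}, and computing $V_{L}$ only reuses the one-safe-project analysis of Section~\ref{SafeProject}.
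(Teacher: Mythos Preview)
Your argument is correct and reaches the claimed inequality, but it proceeds by a different route than the paper. The paper does not compute $V_H$ and $V_L$ by conditioning on the realized quality of the explored project and integrating over the exponential news-arrival time. Instead it invokes the auxiliary balanced-news construction from the proof of Proposition~\ref{prop3}: for the strategy that explores $H$ forever, after time $\bar t_H$ the problem is payoff-equivalent to one in which news on $H$ is balanced at rate $\lambda_H^g$, so the payoff is simply $e_0+w\rho_H(e_H-e_0)$ with $e_0=p_HR_H$ and $e_H=p_HR_H+(1-p_H)p_LR_L$; for the strategy that explores $L$ forever, the payoff equals that in the balanced setting, $e_0+\rho_L(1-\rho_H)(e_L-e_0)+\rho_L\rho_H(e_H-e_0)$ with $e_L=p_LR_L+(1-p_L)p_HR_H$. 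Subtracting gives $\rho_H(w-\rho_L)(e_H-e_0)\ge\rho_L(1-\rho_H)(e_L-e_0)$, which is the stated condition after dividing by $p_LR_L(1-\rho_L)(1-\rho_H)$.

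The trade-off: the paper's route is shorter and more conceptual---it avoids any exponential integrals because the balanced-news reduction packages them into $\rho_H$, $\rho_L$, and $w$---but it leans on the auxiliary-problem argument already established in Proposition~\ref{prop3}. Your route is more elementary and self-contained: you work directly in the original pure-good-news problem, and the price is the harder third step (the integrals for $V_H$ and the substitution of $e^{-\lambda_H^g\bar t_H}$ to collapse them). Your final expression $V_H=V_0+\tfrac{1}{r}\rho_Hw(1-p_H)p_LR_L$ is exactly $e_0/r+w\rho_H(e_H-e_0)/r$, and your $V_L$ matches $e_0/r+\rho_L(1-\rho_H)(e_L-e_0)/r+\rho_L\rho_H(e_H-e_0)/r$, so the two approaches converge at the comparison step.
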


\bigskip

The specification in the claim is reminiscent of the one appearing in Proposition \ref{prop2}, with the added multiplier $\frac{w-\rho _{L}}{1-\rho _{L}}$ for project $H$. As already noted, Proposition \ref{prop3} indicates that in the pure good news setting, we only need to compare two cases, differing in which project is explored until news. Suppose that exploring project $H$ is optimal. At time $\bar{t}_{H}(p_{L},p_{H})$, project $L$ becomes favorable and the agent explores and exploits different projects. As described in the intuition for Proposition \ref{prop3}, the value of exploring project $H$ depends only on the arrival rate of good news: receiving bad news on project $H$ sustains project $L$ as favorable. Thus, the value of exploring project $H$ depends on $\lambda_{H}^{g}$, but not on $\lambda_{H}^{b}$. Consequently, starting at $\bar{t}_{H}(p_{L},p_{H})$, the expected payoffs from this problem are the same as those in an auxiliary balanced news problem with arrival rate of $\lambda_{H}^{g}$ for both good and bad news. The determination of which project to explore must then conform with the characterization in Proposition \ref{prop2}. 

In contrast with the balanced news setting, when $p_{H}R_{H}>p_{L}R_{L}$, the initial comparison includes the factor $\frac{w-\rho _{L}}{1-\rho _{L}}$, penalizing the exploration of project $H$. To understand this penalty, note that, absent news, if the agent explores project $H$, she switches the exploited project only after a duration $\bar{t}_{H}(p_{L},p_{H})$. The larger this duration, the longer the period in which exploration without news does not affect the agent's exploitation, and the less appealing it is to explore project $H$. Indeed, $w$ and $\frac{w-\rho _{L}}{1-\rho _{L}}$ decrease with $\bar{t}_{H}(p_{L},p_{H})$. If both projects are favorable, so that $\bar{t}_{H}(p_{L},p_{H})=0$, or if the agent is infinitely patient ($r=0$), then $w=1$ and the claim's inequality boils down to the comparison in Proposition \ref{prop2}. Similar characterizations hold for other cases of prior probabilities that either project is good.

This claim offers another way to show the way by which disentaglement of exploration from exploitation has bite. Although project $H$ is optimally exploited at the outset, it is optimal to explore project $L$ whenever $\rho _{L}>w$, i.e., when news arrival on project $L$ is fairly rapid. Similarly, as the agent becomes more and more impatient, with $r$ increasing indefinitely, both $w$ and $\rho _{L}$ approach $0$, and the agent explores project $L$. Since $p_{L}R_{L}<p_{H}R_{H}<R_{L}$, in these circumstances, the agent would exploit project $H$ initially regardless of which project she explores. She switches the project she exploits only if she learns that project $L$ is good. Furthermore, unlike the comparative statics in the classical entangled environment, exploration of project $L$ becomes more appealing as $p_{H}$ increases.

In general, comparing the payoffs generated by the optimal policy in our setting to those generated in the classical environment yields similar insights to those observed when one of the projects is safe, as presented in Corollary \ref{corr1}. When arrival rates $\lambda_{L}^{g}$ and $\lambda_{H}^{g}$ are very high or when the discount rates are very low, the agent can achieve payoffs close to those corresponding to a complete information setting in both environments. Similarly, when arrival rates $\lambda_{L}^{g}$ and $ \lambda_{H}^{g}$ are very low, or discount rates are very high, the agent receives an expected payoff approximating the myopic expected payoff in both environments. In particular, the benefits of disentanglement are most pronounced for intermediate levels of arrival and discount rates. Similarly, the benefits of disentanglement are non-monotonic in the prior $p_{H}$.

\subsection{Bad News Settings}

We now turn to bad news settings. Before characterizing the optimal policy, consider the following example, which complements Example 1 and illustrates some of the qualitative differences between the information structures we consider.

\begin{description} \item[Example 2 (Bad News: Project $L$ is Favorable)] Suppose that $\lambda_{z}^{g}=0$ and that $\lambda_{z}^{b}=\lambda_{z}>0$ for $z=L,H$. Furthermore, suppose project $L$ is favorable, so that $p_{L}R_{L}>p_{H}R_{H}$. 

In the classical bandit environment, if the wedge between the projects' expected values is sufficiently high, the agent explores and exploits project $L$. Absent news, the agent becomes increasingly optimistic about project $L$ and thus continues exploring and exploiting project $L$ indefinitely. If project $L$ is indeed good, then the agent never receives bad news on project $L$ and therefore never learns whether project $H$ is good.

In contrast, with full disentanglement, even if the agent explores project $L$ at the outset, which is optimal if $\lambda_{L}$ is high enough, she does not do so indefinitely. Switching the exploited project can occur both upon learning that project $L$ is bad, and when becoming increasingly optimistic about project $H$. As the duration of exploration of project $L$ increases, so does the posterior $p_{L}$, implying that the likelihood of learning that project $L$ is bad vanishes, as does the value of exploring it. Consequently, switching to exploring project $H$ is eventually optimal. Thus, disentanglement is not only useful but, in bad news settings, may lead to more switching of the explored projects than in the classical environment. \end{description}

The observation in Example 2 that, in the classical environment, the agent explores and exploits the same project indefinitely unless news arrives is clearly quite general. At the outset, if the Gittins index is higher for project $x$, that project is explored and exploited. Absent bad news, the agent becomes more optimistic about the quality of project $x$ and its associated Gittins index increases. In contrast, in our environment, when exploration and exploitation are disentangled, the agent may optimally switch the projects she explores.

\bigskip

\begin{proposition}[Optimal Exploration in Bad News Settings]\label{prop4}
Suppose $\lambda_{z}^{b}>\lambda_{z}^{g}$ for $z=L,H$. The optimal exploration strategy is described as follows. 
\begin{itemize} 
\item If the agent initially explores project $H$, she never switches absent news. 
\item If the agent initially explores project $L$, she switches after a period $T<\infty$ without news. 
\end{itemize}
\end{proposition}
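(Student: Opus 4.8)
Since the analysis fixes $\alpha=0$, the agent's exploitation is myopic — she exploits whichever project has the larger current expected reward — so the content of the proposition is a statement about the optimal \emph{exploration} rule. Two facts about the state dynamics drive everything: absent news, only the explored project's posterior moves, and in a bad news setting it drifts \emph{upward} (no news is good news), while the unexplored project's posterior is frozen. I would compare, at a generic belief pair $(q_L,q_H)$, the value $V_H(q_L,q_H)$ of ``explore $H$ until news (then behave optimally)'' with the value $V_L(q_L,q_H)$ of ``explore $L$ until news (then behave optimally).'' Because between the conclusive-news jumps the state moves deterministically, each of these continuation values solves a simple first-order linear ODE in the log-belief of the explored project, with the post-news continuation — which is the one-safe-project problem of Proposition~\ref{prop1} whenever news reveals a project good — pinning down the inhomogeneous term.

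\textit{Part 2: initial exploration of $L$.} As the agent explores $L$ without news, $q_L\uparrow 1$. If $q_HR_H\ge R_L$ then $L$ never becomes favorable and no realization of news about $L$ ever changes the exploited project, so exploring $L$ is strictly dominated by exploring $H$ and the hypothesis cannot hold; hence assume $q_HR_H<R_L$, so $q_L$ eventually rises past the point where $L$ becomes favorable. Once $L$ is favorable, good news on $L$ keeps it favorable (since $q_L<1$ forces $R_L>q_LR_L\ge q_HR_H$), so the only exploration-relevant event is bad news on $L$, whose rate $\lambda_L^b(1-q_L)\to 0$; thus the value of exploring $L$ over and above exploiting myopically tends to $0$. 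Exploring $H$, by contrast, is worth a positive amount bounded away from $0$ uniformly over these states: $q_H$ is frozen at $p_H\in(0,1)$ while $L$ is explored, and, whether via good news on $H$ or — in the pure bad news case — via the upward drift of $q_H$ once $H$ is explored, within a uniformly bounded time the agent ends up, with probability $p_H$, exploiting the high-reward project $H$ and collecting flow $R_H$ rather than $R_L$, a gain bounded away from $0$; the information about $L$ forgone by not exploring it is worth only $O(1-q_L)\to 0$. Hence $V_H>V_L$ once $q_L$ is close enough to $1$, so the agent strictly prefers to switch to exploring $H$; since $q_L$ reaches any level below $1$ in finite time, this switch occurs at some finite $T$, and ``explore $L$ forever'' is never optimal.

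\textit{Part 1: initial exploration of $H$.} First, if $H$ is currently unfavorable, then since exploring $H$ drives $q_H\uparrow 1$ while $q_L$ is frozen and $R_H>R_L\ge q_LR_L$, $H$ becomes favorable in finite time and the \emph{exploited} project then switches to $H$ permanently — this is a switch of exploitation, not of exploration. So it suffices to show that whenever the agent explores $H$, continuing to do so absent news is optimal. When $H$ is favorable, the only exploration-driven change to exploitation is bad news on $H$ (good news on $H$ keeps $H$ favorable); exploring $L$ instead could matter only through good news on $L$, and then only when $R_L>q_HR_H$. The crux is the monotonicity claim: for fixed $q_L$, the advantage $V_H(q_L,q_H)-V_L(q_L,q_H)$ is nondecreasing in $q_H$. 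Intuitively, as the agent grows more confident in the high-reward project $H$, the ``insurance'' value of exploring $H$ — catching the event that $H$ is bad and reverting to $L$ — persists, while the ``hedge'' value of exploring $L$ — discovering that $L$ is secretly good \emph{and} worth more than $H$ — erodes: the threshold $R_L>q_HR_H$ is harder to meet, and switching to $L$ would forfeit the upside of $H$. Granting this, since exploring $H$ raises $q_H$ and freezes $q_L$, the advantage only grows along the no-news path, so if exploring $H$ is optimal at the current belief it stays optimal — the agent never switches absent news. Combined with Part 2, the optimal policy switches exploration at most once, only from $L$ to $H$.

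\textit{Main obstacle.} The substantive step is the monotonicity claim in Part 1 — that more optimism about $H$ strengthens, rather than weakens, the preference for exploring $H$. The naive ``no news is good news, so stay put'' intuition genuinely has a hole: a very confident agent could still wish to ``check'' whether the lower-reward project $L$ happens to be good. Closing it requires an explicit comparison of $V_H$ and $V_L$ (differentiating in $q_H$ the linear ODEs they solve, or coupling the two continuation problems directly), exploiting both the reward asymmetry $R_H>R_L$ and the bad-news structure, which together guarantee that exploring $L$ can only ever redirect the agent toward a less rewarding project whereas exploring $H$ is precisely the hedge against $H$ being bad.
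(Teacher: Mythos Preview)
Your Part 2 is essentially right and matches the paper's logic: once $q_L$ is close enough to $1$, the rate of action-relevant news from exploring $L$ is $O(1-q_L)\to 0$ while the value of exploring $H$ stays bounded below, so indefinite exploration of $L$ is suboptimal and a finite switch time exists.

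Part 1 has a genuine gap, and the paper's route is quite different from yours. You reduce everything to the monotonicity claim that $V_H-V_L$ is nondecreasing in $q_H$, then flag this as the main obstacle without proving it. Two problems. First, even granting monotonicity of the difference between your two \emph{particular} strategies ``explore $H$ until news'' and ``explore $L$ until news,'' this does not by itself rule out a strategy that explores $H$ for a while and then switches to $L$; you would need the comparison at the level of the HJB choice (the optimal value function), or a separate argument that the optimum lies in this two-strategy class. Second, the monotonicity itself is left as a sketch, and neither ``differentiate the ODEs'' nor ``couple the continuations'' is obviously clean across the regime where favorability flips.

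The paper sidesteps monotonicity entirely via an \emph{auxiliary balanced-news} device. When $H$ is favorable, replace the news process so that exploring $H$ delivers both good and bad news at the common rate $\lambda_H^b$. In this auxiliary problem the posterior is frozen along the explore-$H$ path, so by dynamic programming, once exploring $H$ is optimal it remains optimal until news. The key observation is that the strategy ``explore $H$ until news'' yields the \emph{same} payoff in the original and auxiliary problems (only bad news on $H$ is action-relevant when $H$ is favorable), while the auxiliary problem is weakly more informative overall; optimality therefore transfers back. When $L$ is favorable, a companion auxiliary problem with balanced news on $L$ at rate $\lambda_L^b$ produces an explicit threshold $\hat p_L$ on $p_L$: above it, exploring $H$ dominates immediately; below it, a swapping/coupling argument shows the agent explores $L$ first and then switches. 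This balanced-news comparison is the missing idea in your proposal and is what lets the paper avoid proving any monotonicity in $q_H$.
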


\bigskip

In contrast with the optimal policy in good news settings, characterized in Proposition \ref{prop3}, in bad news settings, the optimal policy never entails exploring project $L$ forever. The intuition is similar to that appearing in Example 2. When the agent explores project $L$, absent news, she becomes increasingly optimistic about its prospect. Consequently, regardless of news' arrival rates, after a sufficiently long period of exploring project $L$ without news, the agent exploits project $L$ and the likelihood she learns project $L$ is bad becomes vanishingly small. The value of exploring project $H$, however, remains strictly positive.

Proposition \ref{prop4} also highlights the fact that in bad news settings---unlike in good news settings---disentanglement only plays a role in the relative short run. After enough time passes without news, the agent necessarily explores and exploits project $H$.

\begin{figure}[t] 
\includegraphics[width=1\columnwidth]{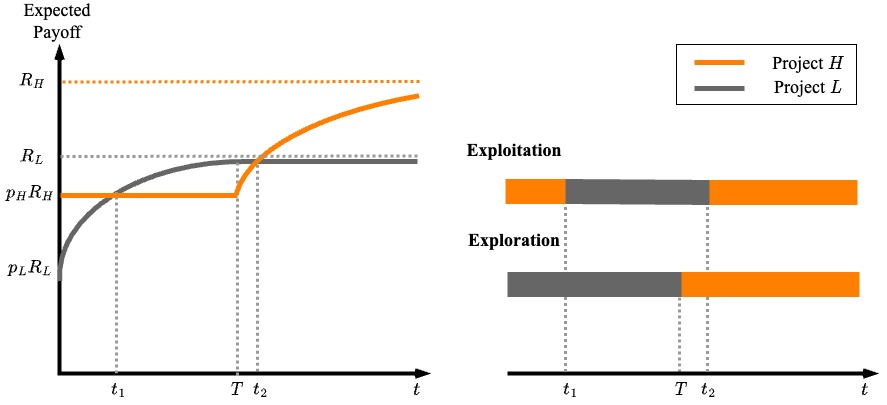} \vspace{1mm} 
\caption{Optimal policy with two risky projects in bad news settings\label{FigureProp4}} 
\end{figure}

Figure \ref{FigureProp4} depicts the exploration and exploitation patterns in bad news settings. In the figure, project $H$ is initially favored and therefore exploited, but project $L$ is explored, say, because it features high news arrival rate. As time progresses without news, the agent's confidence in project $L$ increases. At time $t_1$, project $L$ becomes favored, and the agent switches to exploiting it. By time $T$, the rate of learning on project $L$ has flattened, and the agent switches to exploring project $H$, while continuing to exploit project $L$. Without news, the agent becomes increasingly optimistic about project $H$. At time $t_2$, project $H$ becomes favorable again and the agent switches to exploiting it. Absent news, the agent continues to exploit and explore project $H$ indefinitely.

In general, the proof that the optimal policy entails no switching when project $H$ is explored initially is involved. When $p_{H}R_{H}\geq R_{L}$, the claim follows immediately. In this case, the agent must explore project $H$ from the start since even good news about project $L$ would not lead her to change the project she exploits; exploring project $L$ is not decision relevant. 

When $R_{L}>p_{H}R_{H}\geq p_{L}R_{L}$, it is useful to consider an auxiliary problem in which the agent receives balanced news about project $H$ at the original rate $\lambda_{H}^{b}$ for both good and bad news; the original arrival rates are used when project $L$ is explored. In the auxiliary problem, the agent has more information than in the original problem. If, in the original problem, exploring project $H$ is optimal, then it is also optimal to explore project $H$ in the auxiliary problem, where it is more informative. Absent news, the agent would optimally explore project $H$ until news in the auxiliary problem: her posteriors do not change. The agent can emulate that same strategy even in the original problem. Furthermore, exploring and exploiting project $H$ until news generates the same payoffs in both problems. Since it is optimal in the auxiliary problem, which affords the agent more information, it must be optimal in the original problem as well. The remaining case in which $p_{L}R_{L}>p_{H}R_{H}$ is discussed in the Appendix. The proof of Proposition \ref{prop4} also illustrates that the optimal policy is unique until news arrives.\footnote{If any news arrives about project $H$, or if bad news arrives about project $L$, later exploration has no effect on payoffs. In particular, after such news arrives, any exploration strategy is optimal.}

As for the initial choice of projects, our discussion above suggests that whenever $p_{H}R_{H}\geq R_{L}$, the agent begins by exploring project $H$. When $R_{L}>p_{H}R_{H}$, project $L$ is explored initially when bad news' arrival rate for project $L$ is sufficiently high. The proof of Proposition \ref{prop4} provides the relevant parameter comparisons governing the choice of which project is optimally explored first.

\bigskip

In terms of comparative statics, in bad news settings---unlike good news settings---as $r$ grows indefinitely, the agent optimally explores and exploits the same project: the only news that would change short-term exploitation is bad news on the exploited project. The comparison of payoffs generated by the optimal policy with and without disentanglement is similar to that observed for good news settings and that described with one safe project in Corollary \ref{corr1}. In particular, the benefits of disentanglement are most pronounced for intermediate values of parameters.

\section{Concluding Remarks}
 
This paper presents a new framework for studying experimentation that, unlike the conventional multi-arm bandit paradigm, permits agents to disentangle exploration from exploitation. Our findings are applicable to the extensively studied case of Poisson bandits, accommodating multiple risky projects and general good and bad news settings. We demonstrate that the optimal policy entails full learning asymptotically, displays significant persistence, yet cannot be discerned through an index like Gittins'. The ability to disentangle proves especially beneficial for intermediate parameter values.

Absent news, disentanglement is utilized at different phases of the experimentation process, depending on the format of news. In good news settings, the agent optimally explores and exploits different projects after enough time has passed. In bad news settings, the agent may explore and exploit different projects only in the short run. After a long duration, she necessarily explores and exploits the same project.

We hope our framework can be used for a variety of applications that have been investigated only through the lens of the classical bandit environment, including team experimentation \citep[as in][]{keller2005strategic, keller2010strategic, strulovici2010learning}, expert delegation \citep[as in][]{guo2016dynamic}, job search \citep[as in][]{jovanovic1979job, miller1984job}, and more. 

\section{Appendix}

\subsection{Preliminaries}

\begin{proof}[Proof of Proposition \ref{prop0}] Let $U_{t}$ be the continuation payoff according to the optimal policy at time $t$, and let $V$ denote the full-information payoff, when the realized quality of each project is known.

Denote by $M_{t}$ be the myopic payoff---the value of the favorable project---given the information the agent has at time $t$ under the optimal policy. 
Let $m_{t}=\mathbf{E}M_{t}$. The value $m_t$ is increasing in $t$ since the agent's information improves over time. Also, $m_t\leq \mathbf{E}V$. Therefore, the limit $m_{\infty }=\lim_{t\rightarrow \infty }m_{t}$ exists.

Let $\varepsilon >0$ and $T$ be sufficiently large so that, by exploring both projects at the same rate for a period of time $T$, the agent can achieve a continuation payoff of $\mathbf{E}V-\varepsilon $.

The following inequalities must hold for all $t$: 
\begin{equation*} 
(1-e^{-rT/(1-\alpha )})m_{t}+e^{-rT/(1-\alpha )}(\mathbf{E}V-\varepsilon)\leq \mathbf{E}U_{t}\leq r\int_{\tau =0}^{\infty }e^{-r\tau }m_{t+\tau}\leq m_{\infty }. 
\end{equation*} 
The left inequality follows from the fact that the agent can use the optimal exploration policy and exploit the favorable project for a period of time $T/(1-\alpha )$ and use her exploration resources over that duration to achieve continuation payoff of at least $\mathbf{E}V-\varepsilon $ from time $t+T/(1-\alpha )$ onwards. The right-most inequality follows from the fact that, with any strategy, the conditional expectation of the flow payoff at time $t+\tau$ is smaller than the conditional expectation of the myopic payoff at that time: the most the agent can get at any time $t+\tau $ is $m_{t+\tau}$.

By taking the limit $t\rightarrow \infty $ we obtain that $m_{\infty }\geq \mathbf{E}V-\varepsilon $. Since this is true for every $\varepsilon $ and since $m_{t}\leq \mathbf{E}V$ for all $t$, we get that $m_{\infty }=\mathbf{E}V$. It follows that $\lim_{t\rightarrow \infty }\mathbf{E}U_{t}=\mathbf{E}V$. Finally, since $U_{t}$ is also a sub-martingale, $U_{t}$ must converge to $V$ almost surely, as desired. \end{proof}

\bigskip

For much of our analysis, it will be useful to note that when an agent discounts at a rate $r$ and receives news arriving at a rate $\lambda $, the expected discount at the time $\tilde{t}_{\lambda }$ at which news first arrives is given by: 
\begin{equation} 
\mathbb{E}(e^{-r\tilde{t}_{\lambda }})=\int_{0}^{\infty }\lambda e^{-\lambda t}e^{-rt}dt=\frac{\lambda }{r+\lambda }. \label{Kroovit} 
\end{equation}

\subsection{One Safe Project: Proofs and Additional Analysis}

\begin{proof}[Proof of Proposition \ref{prop1}] Consider decision problem $\Gamma _{B}$ with balanced news arriving at a rate of $\lambda =\lambda_{H}^{g}=\lambda_{H}^{b}$. Absent news, the posterior that the risky project is good remains constant. Thus, the time elapsed exploring a project does not affect which project should be exploited, implying that the optimal strategy is constant as long as no news arrives. If the agent explores and exploits project $H$ until news, the resulting expected payoff is: 
\begin{equation*} 
pR_{H}+(1-p)\frac{\lambda }{r+\lambda }R_{L}. 
\end{equation*} 
The first term corresponds to a realized good project $H$, where, regardless of news, the agent gets rewards. The second term corresponds to a realized bad project $H$. The agent switches to project $L$ only when bad news about project $H$ arrives, with an expected discount of $\frac{\lambda }{r+\lambda}$. Analogous logic implies that the payoff of exploiting project $L$ while exploring project $H$ at a rate of $1-\alpha $ is: 
\begin{equation*} 
R_{L}+p\frac{\lambda (1-\alpha )}{r+\lambda (1-\alpha )}(R_{H}-R_{L}).
\end{equation*} 
In particular, since the difference in payoffs between exploiting project $H$ and project $L$ is monotonic in $p$, there is a cutoff $\bar{p}\left( \alpha\right)$ such that if $p>$ $\bar{p}\left( \alpha \right) $, the DM explores and exploits project $H$ until news, while if $p<\bar{p}\left( \alpha \right)$, the DM explores and exploits project $L$ indefinitely. At the cutoff $\bar{p}\left( \alpha \right) $, the DM is indifferent. Equating the two expected payoff expressions yields the value of $\bar{p}\left( \alpha \right)$ in the statement of the proposition.

We now move to a good-news setting $\Gamma _{G}$, where $\lambda_{H}^{g}>\lambda_{H}^{b}$, so that $\lambda =\max \{\lambda_{H}^{g},\lambda_{H}^{b}\}=\lambda_{H}^{g}$. We claim that a strategy of the following form must be optimal: there is a $T$ (or a $\hat{p}$) such that, absent news, the agent exploits project $H$ for $t<T$ (or $p>\hat{p}$) and exploits project $L$ for $t\geq T$ ($p\leq \hat{p}$), where $T$ is such that exploration of project $H$ for a duration of $T$ leads $p$ to decline to $\hat{p}$. Consider an auxiliary problem $\Gamma _{A}$ with the following modified news process: if the agent exploits project $L$ and explores project $H$ at a rate of $1-\alpha $, she receives both good and bad news on project $H$ at a rate of $\lambda_{H}^{g}$; If the agent exploits project $H$, she receives news as prescribed in problem $\Gamma _{G}$. The candidate strategy delivers the same payoffs in $\Gamma _{A}$ and in $\Gamma _{G}$: (i) when exploiting project $H$, news arrives at the same rate in both problems; (ii) when exploiting project $L$, the additional arrival rate of bad news is not advantageous since only good news on project $H$ would lead the agent to switch projects. Furthermore, payoffs in $\Gamma_{A}$ must be weakly higher than in $\Gamma _{G}$. Thus, if the candidate strategy is optimal in $\Gamma _{A}$, it must be optimal in $\Gamma_{G}$. We now show that such a strategy is optimal in $\Gamma _{A}$. In this problem, absent news, beliefs do not change when exploiting project $L$. Therefore, if at any point it is optimal to exploit project $L$ at time $t$ in $\Gamma_{A}$, then absent news, it is also optimal to do so at later times.

We want to show that $\hat{p}=\bar{p}\left( \alpha \right)$. The value of exploiting project $L$ is the same in $\Gamma_{G}$ and in $\Gamma_{B}$, with with news arrival rate of $\lambda=\lambda_{H}^{g}$, whereas the value of exploiting project $H$ is lower in $\Gamma _{G}$. Therefore, $\hat{p}\geq \bar{p}\left( \alpha \right)$. To show that $\hat{p}\leq \bar{p}\left( \alpha \right)$, notice that exploiting project $L$ until news is optimal if any alternative strategy delivers weakly lower payoffs. Consider the alternative strategy that prescribes exploiting project $H$ for a time interval $\Delta$ before returning to exploiting project $L$ in the event that there is no news. This alternative strategy is superior if 
\begin{equation*} 
-\Delta r\left( R_{L}-pR_{H}\right) +\left( 1-\Delta r\right) p\lambda_{H}^{g}\Delta \alpha \frac{r}{r+\left( 1-\alpha \right) \lambda_{H}^{g}}\left( R_{H}-R_{L}\right) \leq 0. 
\end{equation*} 
Taking limits as $\Delta \rightarrow 0$ and simplifying we obtain that this requires that $p\leq \bar{p}\left( \alpha \right) $. 
\end{proof}

We now obtain expressions for the agent's payoffs, which underlie some of the results described in Section 4.2. We focus on the case of full disentanglement, $\alpha =0$, where the cutoff posterior is $\bar{p}(0)=\frac{R_{L}}{R_{H}}$, which is relevant for our analysis there. Denote by $\Omega (p)=\frac{1-p}{p}$ the odds ratio when the agent believes project $H$ is good with probability $p$. \bigskip

\begin{description} \item[Proposition A (Expected Payoffs with Full Disentanglement)] \textit{ Consider pure news settings with }$\lambda =\max \{\lambda_{H}^{g},\lambda_{H}^{b}\}$\textit{\ and }$0=\min \{\lambda_{H}^{g},\lambda_{H}^{b}\}$ \textit{. For full disentanglement, }$\alpha =0$\textit{, and posterior }$p$ \textit{\ that project }$H$\textit{\ is good,}
\begin{enumerate} \item \textit{Good news (}$\lambda =\lambda_{H}^{g}$\textit{):}
\begin{enumerate} \item \textit{If }$p\leq \bar{p}(0)$\textit{, expected payoffs are }$R_{L}+p \frac{\lambda }{r+\lambda }(R_{H}-R_{L})$\textit{;}
\item \textit{If }$p\geq \bar{p}(0)$\textit{, expected payoffs are} $ pR_{H}+(1-p)\left[ \frac{\Omega (p)}{\Omega (\bar{p}(0))}\right] ^{r/\lambda }\frac{\lambda }{r+\lambda }R_{L}$. \end{enumerate}
\item \textit{Bad news (}$\lambda =\lambda_{H}^{b}$\textit{):}
\begin{enumerate} \item \textit{If }$p\leq \bar{p}(0)$\textit{, expected payoffs are }$R_{L}+p\left[ \frac{\Omega (\bar{p}(0))}{\Omega (p)}\right] ^{r/\lambda }\frac{ \lambda }{r+\lambda }(R_{H}-R_{L})$\textit{;}
\item \textit{If }$p\geq $\textit{\ }$\bar{p}(0)$\textit{, expected payoffs are }$pR_{H}+(1-p)\frac{\lambda }{r+\lambda }R_{L}$. \end{enumerate} \end{enumerate} \end{description}

\begin{proof}[Proof of Proposition A]
The terms corresponding to parts 1.a and 2.b have already been calculated in the proof of Proposition \ref{prop1}. We now turn to parts 1.b and 2.a.

Consider good news settings and suppose $p\geq \bar{p}(0)=\frac{R_{L}}{R_{H}} $. Set $\beta $ to satisfy $p=\beta \bar{p}(1)+(1-\beta )$, so that $\beta $ is the probability such that, if the agent explores a good project $H$---generating either good news and a posterior of $1$, or no news---she will reach the posterior $\bar{p}(0)$. Let $z$ be such that $\beta =pz+(1-p)$, so that $z$ is the conditional probability that the agent reaches the posterior $\bar{p}(0)$, conditional on project $H$ being good. Simple algebra than yields that $z=\frac{\Omega (p)}{\Omega (\bar{p}(0))}$. Let $\bar{t}$ denote the exploration duration of project $H$ after which, absent news, the agent reaches precisely the posterior $\bar{p}(1)$. Since good news arrives at an exponential rate of $\lambda $,\ we can write $z=e^{-\lambda \bar{t}}$. Thus, the discount factor at time $\bar{t}$ can be written as $z^{r/\lambda } $.

Consider an auxiliary problem $\Gamma _{A}$ in which, after reaching the posterior $\bar{p}(0)$, the agent receives balanced news, with arrival rate of $\lambda = \lambda_{H}^{g}$, about project $H$ no matter which project she exploits. The optimal strategy in our setting is optimal in $\Gamma _{A}$ and, additionally, generates the same expected payoffs in both problems. Furthermore, in $\Gamma_{A}$, absent news, the agent is indifferent between exploiting project $L$ or project $H$ when reaching $\bar{p}(0)=\frac{R_{L}}{R_{H}}$. Thus, the payoffs from utilizing the optimal strategy in our setting coincide with those derived from the exploitation of project $H$ until news in $\Gamma _{A}$.

In either our problem or $\Gamma _{A}$, if the agent exploits project $H$ indefinitely, regardless of whether news arrives, she receives the expected value of project $H$, namely $pR_{H}$. Until the posterior $\bar{p}(0)$ is reached, the agent exploits project $H$ and can only learn good news about it. She therefore never switches her exploited project. The benefit of responding to news starting from $\bar{p}(0)$ is that when project $H$ is bad, which occurs with probability $(1-p)$, the arrival of news---at a time with expected discount of $\frac{\lambda }{r+\lambda }$ (see equation \ref{Kroovit})---the agent switches to project $L$ and receives $R_{L}$. Thus, the agent's expected payoff is: 
\begin{equation*} 
pR_{H}+(1-p)z^{r/\lambda }\frac{\lambda }{r+\lambda }R_{L}, 
\end{equation*} 
corresponding to the statement in part 1.b of the proposition.

Consider bad news settings and suppose $p\leq \bar{p}(0)=\frac{R_{L}}{R_{H}}$. Similar arguments to those used for good news settings imply that if we define $\tilde{z}=\frac{\Omega (\bar{p}(0))}{\Omega (p)}$, then $z^{r/\lambda}$ captures the discount factor at the time $\bar{t}$ it takes to reach $\bar{p}(0)$ when exploring project $H$ without news.

Consider an auxiliary problem $\Gamma _{A}$ as before, whereby after reaching $\bar{p}(0)$, the agent received balanced news, with arrival rate of $\lambda=\lambda_{H}^{g}$. The optimal strategy in our setting is optimal in $\Gamma _{A}$ and, additionally, generates the same expected payoffs in both problems. Until the posterior $\bar{p}(0)$ is reached, the agent exploits project $L$ and can only learn bad news about project $H$. She therefore never switches her exploited project. The benefit of responding to news starting from $\bar{p}(0)$ is that when project $H$ is good, which occurs with probability $p$, when news arrives, associated with an expected discount of $\frac{\lambda }{ r+\lambda }$, the agent switches to project $L$ and receives $R_{H}$. Thus, the agent's expected payoff is: 
\begin{equation*} 
R_{L}+pz^{r/\lambda_{H}}\frac{\lambda }{r+\lambda }(R_{H}-R_{L}), 
\end{equation*} 
which corresponds to the expression stated in part 2.a of the proposition.
\end{proof}

\bigskip

In Section 4.2, we evaluated the expected payoff benefit of disentangling exploration from exploitation. The description of payoffs when there is full entanglement, $\alpha =1$, follows from KRC's and KR's analysis. Recalling that $\bar{p}(1)=\frac{rR_{L}}{R_{H}(r+\lambda_{H})-R_{L}\lambda_{H}}$ and using the same notation as above, we have:

\begin{description} \item[Proposition B (Expected Payoffs with Full Entanglement)] \textit{ Consider pure news settings with }$\lambda =\max \{\lambda_{H}^{g},\lambda_{H}^{b}\}$\textit{\ and }$0=\min \{\lambda_{H}^{g},\lambda_{H}^{b}\}$ \textit{. For full entanglement, }$\alpha =1$\textit{, and posterior }$p$ \textit{\ that project }$H$\textit{\ is good,}

\begin{enumerate} \item \textit{Good news (}$\lambda =\lambda_{H}^{g}$\textit{):}

\begin{enumerate} \item \textit{If }$p\leq \bar{p}(1)$\textit{, expected payoffs are }$R_{L}$ \textit{;}

\item \textit{If }$p\geq \bar{p}(1)$\textit{, expected payoffs are }$pR_{H}+ \frac{1-p}{1-\bar{p}(1)}\left[ \frac{\Omega (p)}{\Omega (\bar{p}(1))}\right] ^{r/\lambda }\left( R_{L}-\bar{p}(1)R_{H}\right) $. \end{enumerate}

\item \textit{Bad news (}$\lambda =\lambda_{H}^{b}$\textit{):}

\begin{enumerate} \item \textit{If }$p\leq \bar{p}(1)$\textit{, expected payoffs are }$R_{L}$ \textit{;}

\item \textit{If }$p\geq $\textit{\ }$\bar{p}(1)$\textit{, expected payoffs are }$pR_{H}+(1-p)\frac{\lambda }{r+\lambda }R_{L}$. \end{enumerate} \end{enumerate} \end{description}

\bigskip

\subsection{Two Risky Projects: Proofs}

\begin{proof}[Proof of Proposition \ref{prop2}] Denote by $\rho _{z}=\lambda_{z}/(r+\lambda_{z})$ for $z=L,H$ the expected discount at the time at which news arrives on project $z$. Let $e_{0}=\max \left\{ {p_{L}R_{L},p_{H}R_{H}}\right\} $ be the expected payoff absent any information. Let $e_{z}$ be the expected payoff generated when the agent knows whether project $z$ is good, but has no access to information on the other project. Finally, let $e^{\ast }$ denote the expected payoff the agent receives when she has complete information on the quality of both projects.

If the agent explores project $x$ until news, and then switches to exploring
project $y\neq x$, her expected payoff is
\begin{equation*}
(1-\rho _{x})e_{0}+\rho _{x}(1-\rho _{y})e_{x}+\rho _{x}\rho _{y}e^{\ast }.
\end{equation*}
In particular, exploring project $x$ first is optimal whenever 
\begin{equation*}
(1-\rho _{x})e_{0}+\rho _{x}(1-\rho _{y})e_{x}\geq (1-\rho _{y})e_{0}+\rho
_{y}(1-\rho _{x})e_{2}.
\end{equation*}
Equivalently, 
\begin{equation*}
\rho _{x}(1-\rho _{y})(e_{x}-e_{0})\geq \rho _{y}(1-\rho _{x})(e_{y}-e_{0}),
\end{equation*}
or 
\begin{equation*}
\frac{\rho _{x}}{1-\rho _{x}}(e_{x}-e_{0})\geq \frac{\rho _{y}}{1-\rho _{y}}(e_{y}-e_{0}),
\end{equation*}
which translates to 
\begin{equation}
\lambda_{x}(e_{x}-e_{0})\geq \lambda_{y}(e_{y}-e_{0}).  \label{squirrel}
\end{equation}
If project $x$ is favorable, then $e_{0}=p_{x}R_{x}$, and $ e_{x}=p_{x}R_{x}+\left( 1-p_{x}\right) p_{y}R_{y}$. Therefore, $ e_{x}-e_{0}=\left( 1-p_{x}\right) p_{y}R_{y}$. If project $x$ is unfavorable, then $e_{0}=p_{y}R_{y}$ and $e_{x}=p_{x}\max \left( R_{x},p_{y}R_{y}\right) +\left( 1-p_{x}\right) p_{y}R_{y}$, so $ e_{x}-e_{0}=p_{x}\max \left( R_{x},p_{y}R_{y}\right) -p_{x}p_{y}R_{y}=p_{x}(R_{x}-p_{y}R_{y})^{+}=p_{x}R_{x}(1-\tilde{p}_{x})$, where $\tilde{p}_{x}=\min (p_{y}R_{y}/R_{x},1)$. By substituting into equation (\ref{squirrel}), we conclude that projects are compared via $ \lambda_{x}(1-\tilde{p}_{x})$, where $\tilde{p}_{x}=p_{x}$ when project $x$ is favorable and $\tilde{p}_{x}=\min (p_{y}R_{y}/R_{x},1)$ when project $x$ is unfavorable, as stated in the proposition.
\end{proof}

\begin{proof}[Proof of Proposition \ref{prop3}] Suppose project $x$ is favorable, so that $ p_{x}R_{x}\geq p_{y}R_{y}$. We need to show that it is optimal for the agent to either explore project $x$ for a period $T$ absent news, with $0\leq T\leq \bar{t}_{x}(p_{L},p_{H})$, after which project $y$ is explored until news is received; or to explore project $x$ until news arrives, denoted as exploring $x$ for a duration $T=\infty $. Whenever the agent receives news on one project, but not the other, she reverts to exploring the uncertain project. Once the agent learns the realization of both projects, the exploration strategy has no payoff impacts. For simplicity, we assume the agent reverts to exploring project $x$ in that case. We denote by $\sigma _{T}$ the strategy induced by each such $T\in \lbrack 0,\bar{t} _{H}(p_{L},p_{H})]\cup \{\infty \}$.

Given the original decision problem $\Gamma$, consider an auxiliary problem $\Gamma_{A}$ with the following modified news process:

\begin{enumerate} 
\item If the agent explores project $y$, she receives both good and bad news at a rate of $\lambda_{y}^{g}$.
\item If the agent explores project $x$ \textit{and} by that moment she has already explored project $x$ for a period of at least $\bar{t}_{x}(p_{L},p_{H})$, she receives both good and bad news at a rate of $\lambda_{x}^{g}$.
\item If the agent explores project $x$ \textit{and} by that moment she has explored project $x$ for a period smaller than $\bar{t}_{x}(p_{L},p_{H})$, she receives good news at a rate of $\lambda_{x}^{g}$ and bad news at a rate of $\lambda_{x}^{b}$. 
\end{enumerate}

Under any exploration strategy, and at any point in time, the agent is at least as well informed in $\Gamma _{A}$ as in $\Gamma $. In particular, the optimal expected payoff that can be achieved in $\Gamma_{A}$ is weakly higher than the optimal expected payoff that can achieved in $\Gamma $.

\vspace{2.5mm}

\item[Claim A1] \textit{For any }$T\in \lbrack 0,\bar{t}_{x}(p_{L},p_{H})]\cup \{\infty \}$\textit{, the strategy }$\sigma _{T}$\textit{\ generates the same expected payoff in }$\Gamma _{A}$\textit{\ as it does in }$\Gamma $ \textit{.}

\item[Proof of Claim A.1] For $T\leq \bar{t}_{x}(p_{L},p_{H})$, the agent receives information at the same arrival rate in both $\Gamma $ and $\Gamma _{A}$ during the initial duration of $T$. If news arrives during that period, the resulting optimal exploitation is identical in both problems: if good news arrives, exploit project $x$ indefinitely if $x=H$ or until good news arrives from project $y$ if $x=L$, and if bad news arrives from project $x$, then exploit project $y$ indefinitely.   Absent news, project $x$ remains favorable when the agent switches to exploring project $y$. Thus, from then on, only good news on project $y$ alters her exploitation. Since the arrival rate of good new on project $y$ is the same in $\Gamma $ and $\Gamma _{A}$, the resulting expected payoffs coincide as well.

Suppose now that $T=\infty $, so that the agent explores project $x$ until receiving news. Until time $\bar{t}_{x}(p_{L},p_{H})$, news arrives at the same rate in both $\Gamma $ and $\Gamma _{A}$. Absent news, at time $\bar{t}_{x}(p_{L},p_{H})$, the agent is indifferent between the two projects: they are both favorable. At any $t>$ $\bar{t}_{x}(p_{L},p_{H})$, absent news, it is optimal to exploit project $y$ in both $\Gamma $ and $\Gamma _{A}$. Only good news on project $x$ then alters exploitation, and good news arrives at the same rate in $\Gamma $ and $\Gamma _{A}$. Therefore, the resulting expected payoffs coincide.

\vspace{2.5mm}

\item[Claim A.2] \textit{There exists }$T\in \lbrack 0,\bar{t} _{x}(p_{L},p_{H})]\cup \{\infty \}$\textit{\ such that }$\sigma _{T}$\textit{ \ is optimal in }$\Gamma _{A}$.

\item[Proof of Claim A.2] In $\Gamma _{A}$, if the agent explores project $y$ and sees no news, her belief about the quality of project $y$ does not change. Therefore, by dynamic-programming principles, if it is optimal for the agent to explore project $y$ at any point then, absent news, it is also optimal to explore project $y$ at any later point. Similarly, if the agent has explored project $x$ for a period of at least $\bar{t}_{x}(p_{L},p_{H})$, continuing to explore project $x$ until news is optimal. The conclusion follows.

\vspace{2.5mm}

Claims A.1 and A.2 illustrate the optimality of the class of strategies specified in the proposition. We now turn to showing that in settings with pure good news on at least one project, exploration switches only upon receiving news.

\vspace{2.5mm}

\item[Claim A.3] \textit{If $\lambda_{x}^{b}=0$, there exists an optimal strategy in $\Gamma $ with $T=0$ or $T=\infty $.}

\item[Proof of Claim A.3] Suppose Alex explores project $y$ from the start, i.e., Alex uses the strategy $\sigma _{0}$. Bailey, facing the same decision problem, uses $\sigma _{T}$ with $0<T\leq \bar{t}_{x}(p_{L},p_{H})$. We claim that Alex has a higher expected payoff than Bailey.

Consider Alexis and Baylor, who face a coupled problem. Baylor, like Bailey, explores project $x$ for a period of $T$ or until receiving news. Denote by $ \omega $ the random time when Baylor either receives news on project $x$ or a period of $T$ has transpired (so that $\omega $ is the minimum between $T$ and the arrival time of news on project $x$, which is distributed exponentially with arrival rates of $\lambda_{x}^{b}=0$ and $\lambda_{x}^{g}$ ). Like Bailey, after time $\omega $, Baylor switches to exploring project $ y $. Unlike Bailey, at any time $t\geq \omega $, Baylor receives the news Alexis has received at time $t-\omega $ on project $y$. Alexis, like Alex, explores project $y$ until news. Let $\tau $ be the random variable that represents the first arrival of news on project $y$ for Alex (distributed exponentially with parameters $\lambda_{y}^{b}$ and $\lambda_{y}^{g}$). At any time $t\in \lbrack \tau ,\tau +\omega ]$, Alexis receives the news Baylor has received on project $x$ at time $t-\tau $, after which Alexis receives news independently on project $x$. Thus, Alexis' and Baylor's information is coupled. Since Alex and Bailey's news arrivals are independent and identical, Alexis receives the same expected payoff as Alex and Baylor receives the same expected payoffs as Bailey. We now show that Alexis receives a weakly higher expected payoff than Baylor.

Conditional on $\tau $, at any moment $t$ such that $0\leq t\leq \omega +\tau $, Baylor does not learn whether project $y$ is good or bad. Since $ \lambda_{x}^{b}=0$, Baylor can only receive good news or no news about project $x$ until such time $t$. Since $T\leq \bar{t}_{x}(p_{L},p_{H})$, in either case, Baylor continues exploiting project $x$. Alexis, however, exploits project $x$ until time $\tau $, when a switch to project $y$ may be optimal when news about project $y$ is good. Therefore, conditional on $ \omega $ and $\tau $, up to time $\min \{\omega ,\tau \}$, Alexis' and Baylor's expected payoffs coincide, whereas over the period between $\min \{\omega ,\tau \}$ and $\omega +\tau $, Alexis' expected payoff is weakly higher than Baylor's. At any moment $t$ such that $t>\omega +\tau $, both Alexis and Baylor know whether project $y$ is good or bad and have explored project $x$ for a period $t-\tau $, receiving the same information ex-ante.\footnote{Recall that we assumed the agent explores the ex-ante favorable project $x$ after receiving news on project $y$, even when having received news on project $x$ as well.} Therefore, at moments $t$ such that $t>\omega +\tau $, Alexis' and Baylor's expected payoffs are the same. Therefore, Alexis' expected payoff is weakly higher then Baylor's, as required.
\end{proof}

\begin{proof}[Proof of Claim \ref{claim1}] We show that the unique optimal strategy is the strategy described in Proposition \ref{prop3} and that the optimal switching point of exploration is generically unique. It suffices to prove this for an auxiliary problem as in the proof of Proposition \ref{prop3}, in which balanced news from the unfavorable arm $y$ arrives at rate $\lambda^g_y$.

Fix all parameters except $p_x$. Consider the difference in the agent's expected payoffs between a strategy that explores project $x$ for a small duration $\Delta>0$ and then switches to exploring project $y$ forever and a strategy that explores project $y$
forever. This difference is given by 
\begin{multline}\label{fo-good}
\Delta*\left((1-p_x)\lambda_x^b*\frac{r}{r+\lambda^g_y}p_yR_y-rp_y\frac{\lambda^g_y}{r+\lambda^g_y}(R_y-p_xR_x)\right)+o(\Delta)
=\\\Delta*\frac{r}{r+\lambda^g_y}p_yR_y D(p_x)+o(\Delta),
\end{multline}
with $D(p_x)=\lambda_x^b(1-p_x) - \lambda_y^g(1-p_xR_x/R_y)$.
The first term  is the benefit from the fact that the agent learns that project $x$ is bad and then, until receiving news from project $y$, exploits project $y$ (instead of getting a payoff of $0$ from project $x$). The second term corresponds to the costs incurred when project $y$ is good and there is a duration $\Delta$ in which the agent would have exploited project $y$, but exploits project $x$ instead. 

In the auxiliary problem, and as long as project $x$ remains favorable, an optimal strategy must explore project $x$ if $D(p_x)>0$ and cannot switch from exploring project $x$ to exploring project $y$ if $D(p_x)<0$. Thus, an optimal strategy entails an exploration switch from project $x$ to project $y$ at the point in which $D(p_x)=0$.  

Now, $D(p_x)$ is linear in $p_x$ with slope $-\lambda_x^b+\lambda_y^g R_x/R_y$. As long as $\lambda_x^b R_y \neq \lambda_y^g R_x$, which is generically the case, $D(p_x)$ has a non-trivial slope and there is only one point in which exploration can switch from project $x$ to project $y$. 
\end{proof}

\begin{proof}[Proof of Claim \ref{claim2}] If project $L$ is explored, only good news yields a switch of the exploited project. If project $H$ is explored, absent news, the agent switches her exploited project after $\bar{t}_{H}(p_{L},p_{H})$ has passed, when she is indifferent between the expected payoffs of both projects. We now characterize $\bar{t} _{H}(p_{L},p_{H})$, where we drop the arguments when there is no risk of confusion.

By definition, after a duration $\bar{t}_{H}$ of exploring project $H$, the agent's posterior that project $H$ is good declines to $qp_{H}$, where $q=\frac{p_{L}R_{L}}{p_{H}R_{H}}\in (0,1)$. Certainly, if the agent receives good news on project $H$ before reaching indifference, the corresponding posterior jumps to $1$. The conditional probability that the agent reaches indifference when exploring project $H$, conditional on project $H$ being good, is therefore $\frac{q(1-p_{H})}{1-qp_{H}}$.\footnote{The arguments are reminiscent of those used in the proof of Proposition A. Set $\beta $ to satisfy $p_{H}=\beta qp_{H}+(1-\beta )$, so that $\beta $ is the probability such that, if the agent explores project $H$, she will reach a time at which she is indifferent between the projects. Let $z$ be such that $\beta =p_{H}z+(1-p_{H})$, so that $z$ is the conditional probability that the agent reaches indifference, conditional on project $H$ being good. Simple algebra yields the specified formula.} The exponential distribution of news then yields: 
\begin{equation*} 
e^{-\lambda_{H}^{g}\bar{t}_{H}}=\frac{q(1-p_{H})}{1-qp_{H}}. 
\end{equation*} 
The discount at the indifference time $\bar{t}_{H}$ is given by $w=e^{-r\bar{ t}_{H}}$. As before, let $\rho_{z}=\lambda_{z}^{g}/(r+\lambda_{z}^{g})$, $ z=L,H$, denote the expected discount at the time $\bar{t}_{H}$ at which news first arrives when the arrival rate is $\lambda_{z}^{g}$.

Suppose the agent explores project $H$ indefinitely. As argued in the proof of Proposition \ref{prop3}, her payoff coincides with the payoff of an agent who, after time $\bar{t}_{H}$, sees all news from project $H$---good or bad, at a (balanced) rate $\lambda_{H}^{g}$. So, a-priori, the agent expects to receive $e_{0}=p_{H}R_{H}$ up to a time that is exponentially distributed with parameter $\lambda_{H}^{g}$ beyond the indifference time $\bar{t}_{H}$ . After that time, she receives $e_{H}=p_{H}R_{H}+(1-p_{H})p_{L}R_{L}$. The agent's expected payoff is therefore: 
\begin{equation*} 
(1-w\rho _{H})e_{0}+w\rho _{H}e_{H}=e_{0}+w\rho _{H}(e_{H}-e_{0}).
\end{equation*} 
Now, suppose the agent explores project $L$ instead. Define, analogously, $ e_{L}=p_{L}R_{L}+p_{H}R_{H}(1-p_{L})$ to be the expected value from exploring project $L$ upon indifference.

As shown in the proof of Proposition \ref{prop3}, the agent's expected payoff is the same as in the balanced news setting, and equals
\begin{equation*} 
(1-\rho _{L})e_{0}+\rho _{L}(1-\rho _{H})e_{L}+\rho _{L}\rho _{H}e_{H}=e_{0}+\rho _{L}(1-\rho _{H})(e_{L}-e_{0})+\rho _{L}\rho _{H}(e_{H}-e_{0}). 
\end{equation*} 
Thus, it is optimal to explore project $H$ if and only if \begin{equation*} 
\rho _{H}(w-\rho _{L})(e_{H}-e_{0})\geq \rho _{L}(1-\rho _{H})(e_{L}-e_{0}). 
\end{equation*} 
The statement of the claim then follows.
\end{proof}

\begin{proof}[Proof of Proposition \ref{prop4}] The proof follows several claims:

\vspace{2.5mm}

\item[Claim B.1] \textit{There exists an optimal policy with the property that, if it is optimal to explores project $H$ at some point when it is favorable, then, from that point on, it is optimal to explore project $H$ until news is received.}

\item[Proof of Claim B.1] Consider an auxiliary problem $\Gamma _{A}$ that coincides with the original problem $\Gamma$ with the following modification: if the agent explores project $H$ and project $H$ is currently weakly favorable, the agent receives both good news and bad news on project $H$ at a rate of $\lambda_{H}^{b}$. In particular, the agent has more information in $\Gamma_{A}$ than in $\Gamma$.

Any strategy described in the statement of the proposition generates the same payoff in $\Gamma _{A}$ as it does in $\Gamma $. Indeed, if project $H$ is favorable, and the agent explores it, then in both $\Gamma $ and $\Gamma _{A}$, project $H$ would remain favorable as long as no bad news arrive.

It suffices to show that, under the optimal strategy in $\Gamma _{A}$, once the agent starts exploring project $H$ , she continues doing so until receiving news. Indeed, if the agent explores project $H$ in the auxiliary problem when project $H$ is currently favorable, then the state variable---her posterior---does not change. By dynamic programming principles, it must be optimal to continue exploring project $H$ until news arrives.

\vspace{2.5mm}

\item[Claim B.2] \textit{If project $L$ is favorable at the outset, then, at any point in which project $H$ becomes strictly favorable, it is optimal to explore project $H$ until receiving news.}

\item[Proof of Claim B.2] 
Project $H$ becomes strictly favorable when one of the following occurs. First, upon arrival of bad news about project $L$ or good news about project $H$, project $H$ becomes favorable and exploring either project is optimal. The second option is that the agent explores project $H$. Since $\lambda_{H}^{b}>\lambda_{H}^{g}$, over time, the agent becomes more optimistic about project $H$. In this case, by Claim B.1, the agent should continue exploring project $H$ .

\vspace{2.5mm}

For the next step of the proof, consider a balanced news setting with arrival rates $\lambda_{H}^{g}$ for project $H$ and $\lambda_{L}^{b}$ for project $L$ that starts with prior probabilities $p_{H}$ and $p_{L}$ such that $L$ is favorable. Let $\hat{p}_{L}$ be such that \ the agent explores project $H$ if $p_{L}\geq \hat{p}_{L}$, holding all other parameters fixed. By Proposition \ref{prop2}, 
\begin{equation} \lambda_{H}^{g}(1-\tilde{p}_{H})=\lambda_{L}^{b}(1-\hat{p}_{L})\text{,} \label{phat} 
\end{equation} 
where $\tilde{p}_{H}=p_{L}R_{L}/R_{H}$. 

\vspace{2.5mm} 

\item[Claim B.3] \textit{If project $L$ is favorable and $p_{L}>\hat{p}_{L}$, then it is optimal to explore project $H$ until news.}

\item[Proof of Claim B.3] Consider an auxiliary problem $\Gamma _{B}$, a modification of the original problem $\Gamma$ in which exploring project $L$ generates balanced news at a rate of $\lambda_{L}^{b} $. The agent is weakly better off in $\Gamma _{B}$ relative to the original problem $\Gamma $ since she has access to information that arrives at higer rates. Furthermore, exploring project $H$ until news generates the same payoff in $\Gamma _{B}$ as it does in $\Gamma $: news about project $H$ arrives at the same rate in both problems and, in both, exploiting project $H$ (or project $L$) forever once project $H$ is observed to be good (or bad) maximizes expected payoffs.

Suppose that project $L$ is favorable and $p_{L}>\hat{p}_{L}$. We show that, in $ \Gamma _{B}$, the agent optimally explores project $H$ until news. Assume, by way of contradiction, that it is optimal to explore project $L$ in $ \Gamma _{B}$.\footnote{ Since news is balanced on project $L$, if it is optimal to explore project $ L $ at any posterior, it is optimal to continue exploring project $L$ as long as news does not arrive.} Absent news, exploring project $L$ does not alter the agent's beliefs about the projects' quality and, therefore, it must be optimal to explore project $L$ until news. Consider a deviation to first exploring project $H$ for a short interval $\Delta $ and then exploring project $L$ until news, where $\Delta $ is sufficiently small so that, absent news during the time period $\Delta $, project $L$ remains favorable. We claim that this deviation improves payoffs. Suppose Alex plays the candidate strategy---exploring project $L$ until news---and Bailey follows the deviation.

Let $\tau _{L}$ and $\tau _{H}$ denote the random variables corresponding to the first arrival time of news on project $L$ and project $H$, respectively, where arrival rates are those specified in the auxiliary problem $\Gamma _{B}$. Both Alex and Bailey receive news on project $z=L,H$ after exploring project $z$ for a duration $\tau _{z}$. Thus, Alex's and Bailey's information is coupled. Furthermore, Alex's and Bailey's payoffs from the suggested strategies are the same as before.

The difference between Bailey's and Alex's payoffs is then:
\begin{equation*} 
p_{H}\lambda_{H}^{g}\Delta \frac{r}{r+\lambda_{L}^{b}} (R_{H}-p_{L}R_{L})-(1-p_{L})\frac{\lambda_{L}^{b}}{r+\lambda_{L}^{b}} r\Delta p_{H}R_{H}+o(\Delta). 
\end{equation*} 
The first term corresponds to the case in which Bailey receives good news on project $H$ in the initial duration of $\Delta $ (occurring with probability $ p_{H}\lambda_{H}^{g}\Delta $), while Alex is delayed in learning about project $H$ until receiving news on project $L$ at time $\tau _{L}$ (occurring at a rate of $\lambda_{L}^{b}$ whether project $L$ is good or bad). The expected discounted weight of that duration is $1-\frac{\lambda_{L}^{b}}{ r+\lambda_{L}^{b}}=\frac{r}{r+\lambda_{L}^{b}}$ (see equation (\ref{Kroovit})). The second term corresponds to project $L$ being bad. In that case, conditional on not receiving news in the first period of $\Delta $ (occurring with probability $1-\lambda_{H}^{g}\Delta $), Bailey would be delayed by $\Delta $ relative to Alex in learning that project $L$ is bad. Observing that project $L$ is bad would lead either agent to exploit project $H$, which generates an expected payoff of $p_{H}R_{H}\Delta$ (up to $o(\Delta )$ due to updating on project $H$ during the initial period of $\Delta $). The relevant expected discount at $\tau _{L}$, when Alex learns that project $L$ is bad, is $\frac{\lambda_{L}^{b}}{r+\lambda_{L}^{b}}$. 

Reorganizing terms implies that the payoff difference is: \begin{equation*} 
\Delta \frac{r}{r+\lambda_{L}^{b}}p_{H}R_{H}\left( \lambda_{H}^{g}(1- \tilde{p}_{H})-\lambda_{L}^{b}(1-p_{L})\right) +o(\Delta )>0,
\end{equation*} 
where the inequality follows from our assumption that $p_{L}>\hat{p}_{L}$. The conclusion of Claim B.3 then follows using Claim B.1.

\vspace{2.5mm}

\item[Claim B.4] \textit{If project $L$ is favorable and $p_{L}\leq \hat{p}_{L}$, it is optimal to explore project $L$ for some period, and then explore project $H$ until news.}

\item[Proof of Claim B.4] Consider an optimal strategy, and let $T$ be the first time such that, according to this strategy, if no news arrives up to time $T$, either project $H$ becomes favorable or the posterior that project $L$ is good reaches $\hat{p}_{L}$. By Claims B.2 and and B.3, if no news arrives by time $T$, it is optimal to explore project $H$. We claim that, before time $T$, it is optimal to explore project $L$ for some period and then switch to exploring project $H$.

Suppose, toward a contradiction, that the claim is violated. Then, there must be a sufficiently small $\Delta $, a fraction $\beta >0$, and times $ t^{\prime }<t^{\prime \prime }<T$ with $t^{\prime }-\Delta >0$ and $ t^{\prime \prime }+\Delta <T,$ such that the agent optimally explores project $H$ for an amount of time $\beta \Delta $ in $I^{\prime }=[t^{\prime }-\Delta ,t^{\prime }]$ and explores project $L$ for an amount of time $ \beta \Delta $ in $I^{\prime \prime }=[t^{\prime \prime },t^{\prime \prime }+\Delta ]$.

We now show that swapping the order of these $\beta \Delta $ exploration resources between the intervals $I^{\prime }$ and $I^{\prime \prime }$ improves the agent's expected payoff. Indeed, suppose Alex plays the candidate strategy and Bailey performs the swap, and their news are coupled as follows:

\begin{enumerate} 
\item All news coming from exploration that was not interchanged, which we call \emph{regular news}, are the same for Alex and Bailey.
\item The \emph{additional news on project $L$} that Bailey receives from the additional $\beta \Delta $ exploration during $I^{\prime }$ is received by Alex during $I^{\prime \prime }$
\item The \emph{additional news on project $H$} that Alex receives from the additional $\beta \Delta $ exploration during $I^{\prime }$ is the news received by Bailey during $I^{\prime \prime }$. \end{enumerate}

We need to show that Bailey's payoff is higher than Alex's. We will, in fact, show that this is the case even if Bailey does not play optimally: we assume that if Bailey receives additional good news about project $L$, Bailey ignores this news and switches to exploring project $H$ only when either regular good news arrives about project $L$ or when Alex receives the additional good news about project $L$ (in which case Alex also switches to only exploring project $H$).

Until time $T$, Alex and Bailey both exploit project $L$ unless they received bad news from project $L$ or good news from project $H$. They gain different payoffs at time $ t\in \lbrack t^{\prime },t^{\prime \prime }]$ only if they receive no regular news up to time $t$ and either
\begin{enumerate} 
\item bad news on project $L$ is received by Bailey over $I^{\prime }$, in which case Bailey exploits project $H$, while Alex exploits project $L$; or
\item good news on project $H$ is received only by Alex over $I^{\prime }$, in which case Alex exploits project $H$, while Bailey exploits project $L$. 
\end{enumerate}

Therefore, the difference in expected payoffs is 
\begin{equation*} 
\beta \Delta \int_{t^{\prime }}^{t^{\prime \prime }}re^{-rt}\rho (t)\left[ \lambda_{L}^{b}(1-p_{L}(t))p_{H}(t)R_{H}-\lambda_{H}^{g}p_{H}(t)(R_{H}-p_{L}(t)R_{L}\right] ~\text{d}t+O(\Delta), 
\end{equation*} 
where $\rho (t)$ is the probability that there were no regular news until time $t$; the probabilities $p_{H}(t)$ and $p_{L}(t)$ are, respectively, the conditional probabilities that projects $H$ and $L$ are good given this event; and $ \tilde{p}_{H}(t)=p_{L}(t)R_{L}/R_{H}$. Rearranging terms, this payoff difference equals: \begin{equation*} \beta \Delta \int_{t^{\prime }}^{t^{\prime \prime }}re^{-rt}\rho (t)p_{H}(t)R_{H}\left[ \lambda_{L}^{b}(1-p_{L}(t))-\lambda_{H}^{g}(1- \tilde{p}_{H}(t)\right] ~\text{d}t+o(\Delta)>0\text{,} \end{equation*} where the inequality follows from the fact that $p_{L}(t)<\hat{p}_{L}$ for every $t<t^{\prime \prime }$.

\vspace{2.5mm}

\item[Claim B.5] \textit{If project $H$ is favorable, it is optimal to explore project $L$ for some period, and then explore project $H$ until news.}

\item[Proof of Claim B.5] Suppose $R_{L}>p_{H}R_{H}\geq p_{L}R_{L}$. From Claim B.1, once the agent starts exploring project $H$, it is optimal to do so until news. Towards a contradiction, suppose the agent explores project $ L $ until news. Absent news, at any time $t>\bar{t}_{L}(p_{L},p_{H})$, project $L$ becomes favorable. Claims B.3 and B.4 then lead to a contradiction.

If $p_{H}R_{H}\geq R_{L}$, news on project $L$ cannot generate a switch in the agent's exploited project and exploring project $L$ indefinitely is dominated. The claim then follows directly from Claim B.1.

\vspace{2.5mm}

The proposition follows from Claims B.3, B.4, and B.5.
\end{proof}

\newpage 

\singlespacing \bibliographystyle{chicago} 
\bibliography{ExplorationExploitation} 

\end{document}